\documentclass[copyright,creativecommons]{eptcs}

\usepackage[english]{babel}
\usepackage[latin1]{inputenc}
\usepackage{amsmath,amssymb,amsfonts,mathrsfs,latexsym,stmaryrd}
\usepackage{array}
\usepackage{url}
\usepackage[newitem,newenum,neverdecrease]{paralist}
\usepackage{ifpdf}
\usepackage{color}
\definecolor{darkcyan}{rgb}{0,0.55,0.55}
\usepackage[OT1]{eulervm}
\usepackage{times}
\usepackage{graphicx}
\usepackage{latexsym}
\usepackage{fancybox}
\usepackage{tikz}
\usepgflibrary{decorations.pathreplacing} 
\usepgflibrary[decorations.pathreplacing] 
\usetikzlibrary{decorations.pathreplacing} 
\usetikzlibrary[decorations.pathreplacing]
\usepackage{stackrel}
\usepackage{xspace}
\usepackage{amsthm}

\usepackage{pseudocode}
\usepackage{chngcntr}

\counterwithout{pseudocode}{section}

\newcounter{newpseudonum}[pseudocode]

\ifpdf
  \providecommand{\refline}[1]{\hyperref[#1]{(\ref*{#1})}}
\else
  \providecommand{\refline}[1]{\ref*{#1}}
\fi


\renewcommand{\ELSE}{\\\pcodetab{1}\mbox{ \bfseries \makebox[0pt][l]{else}\phantom{then} }}
\renewcommand{\RETURN}[1]{\ifthenelse{\equal{#1}{} }{\mbox{\bfseries return}}{\mbox{\bfseries return}#1}}
\newcommand{\FUNCTION}[2]{\mbox{\bfseries proc }\mbox{\textsc{#1}}\left(\ensuremath{#2}\right)\\}

\newlength{\pcodewidth}
\setlength{\pcodewidth}{\textwidth}
\addtolength{\pcodewidth}{-58pt}

\newenvironment{code}[1]{
\begin{Sbox}
\!\!\begin{minipage}{#1}
\bfseries
\noindent
\scriptsize
$$
\begin{array}{@{\hspace*{1ex}}lr@{}}
}{
\end{array}
$$
\end{minipage}\vspace{-2mm}
\end{Sbox}
\shadowbox{\TheSbox}{}
}
\newtheorem{definition}{Definition}
\newtheorem{proposition}{Proposition}
\newtheorem{theorem}{Theorem}

\newcommand{\AP}{\mathcal{A}\mathcal{P}}
\newcommand{\kphi}{k_{\varphi}}

\newcommand{\oy}{\overline{y}}

\newcommand{\G}{[G]}
\newcommand{\Essentials}{\cE\cS}
\newcommand{\Blocked}{\cB locked}

\newcommand{\dl}{\lozenge_l}
\newcommand{\dr}{\lozenge_r}
\newcommand{\Dl}{\Box_l}
\newcommand{\Dr}{\Box_r}
\newcommand{\mmodels}{\Vdash}

\newcommand{\lenlt}[1]{\len{<}{#1}}
\newcommand{\leneq}[1]{\len{=}{#1}}

\newcommand{\len}[2]{\ensuremath{\mathsf{len_{\mathcal{#1} #2}}}}

\usepackage{breakurl}             

\ifpdf
  \DeclareGraphicsRule{*}{mps}{*}{}
\else
  \DeclareGraphicsRule{*}{eps}{*}{}
\fi


\setlength{\pltopsep}{\smallskipamount}
\setlength{\plitemsep}{\smallskipamount}
\newenvironment{dotlist}{\begin{compactitem}}{\end{compactitem}}

\newcommand{\refcond}[1]{\hyperref[#1]{C\ref*{#1}}}
\newcommand{\refprop}[1]{\hyperref[#1]{P\ref*{#1}}}
\newcommand{\refaxiom}[1]{\hyperref[#1]{A\ref*{#1}}}


\newcommand{\bbD}{\mathbb{D}}

\newcommand{\bbN}{\mathbb{N}}

\newcommand{\bbZ}{\mathbb{Z}}
\newcommand{\bbP}{\mathbb{P}}

\newcommand{\cA}{\mathcal{A}}
\newcommand{\cB}{\mathcal{B}}
\newcommand{\cC}{\mathcal{C}}

\newcommand{\cE}{\mathcal{E}}

\newcommand{\cG}{\mathcal{G}}

\newcommand{\cK}{\mathcal{K}}
\newcommand{\cL}{\mathcal{L}}
\newcommand{\cM}{\mathcal{M}}

\newcommand{\cO}{\mathcal{O}}

\newcommand{\cR}{\mathcal{R}}
\newcommand{\cS}{\mathcal{S}}
\newcommand{\cT}{\mathcal{T}}





\newcommand{\s}[1]{\vspace{#1mm}}

\providecommand{\mit}{\mathit}
\renewcommand{\mit}{\mathit}


\makeatletter
\def\shortrightarrowfill@{\arrowfill@\relbar\relbar\shortrightarrow}
\newcommand{\ort}{\mathpalette{\overarrow@\shortrightarrowfill@}}
\makeatother
\makeatletter
\def\shortleftarrowfill@{\arrowfill@\relbar\relbar\shortleftarrow}
\newcommand{\olft}{\mathpalette{\overarrow@\shortleftarrowfill@}}
\makeatother
\makeatletter
\def\shortleftrightarrowfill@{\arrowfill@\relbar\relbar\leftrightarrow}
\newcommand{\olftrt}{\mathpalette{\overarrow@\shortleftrightarrowfill@}}
\makeatother

\newcommand{\nin}{\not\in}
\newcommand{\sat}{\vDash}

\newcommand{\vel}{\;\vee\;}


\newcommand{\lenf}[1]{{\left| #1 \right|}}







\newtheorem{lemma}{Lemma}


\newcommand{\closure}{\cC\mit{l}}

\newcommand{\type}{\cT\mit{ype}}
\newcommand{\req}{\cR\mit{eq}}
\newcommand{\obs}{\cO\mit{bs}}

\newcommand{\labeledrightarrow}[1]{\overset{\text{\raisebox{-0.1ex}[0ex][-0.1ex]{$_{#1\,}$}}}{\longrightarrow}}

\newcommand{\dep}[1]{\,\text{\raisebox{-0.2ex}{$\labeledrightarrow{#1}$}}\,}



\title{An Optimal Decision Procedure for MPNL over  the Integers}
\author{Davide Bresolin
\institute{Department of Computer Science \\ University of Verona (Italy)}\\
{\small davide.bresolin@univr.it}
\and
Angelo Montanari
\institute{Department of Mathematics and Computer Science \\ University of Udine (Italy)}\\
{\small angelo.montanari@uniud.it}
\and
Pietro Sala
\institute{Department of Computer Science \\ University of Verona (Italy)}\\
{\small pietro.sala@univr.it}
\and
Guido Sciavicco
\institute{University of Murcia (Spain) and \\ University for Information Science and Technology, \\
Ohrid (Macedonia)}\\
{\small guido@um.es}
}

\begin{document}
\maketitle

\begin{abstract}
Interval temporal logics provide a natural framework for qualitative and
quantitative temporal reasoning over interval structures, where the truth
of formulae is defined over intervals rather than points.
In this paper, we study the complexity of the satisfiability problem for  Metric
Propositional Neighborhood Logic (MPNL). MPNL  features two modalities
to access  intervals ``to the left'' and ``to the right'' of the current 
one, respectively, plus an infinite set of length constraints.
MPNL, interpreted over the naturals, has been recently shown 
to be decidable by a doubly exponential procedure. We improve such a 
result by proving that MPNL is actually EXPSPACE-complete (even when 
length constraints are encoded in binary), when interpreted over finite 
structures, the naturals, and the integers, by developing 
an EXPSPACE decision procedure for MPNL over the integers, which can be 
easily tailored to finite linear orders and the naturals (EXPSPACE-hardness 
was already known).
\end{abstract}

\section{Introduction}

Interval temporal logics provide a natural framework for
temporal representation and reasoning on interval structures 
over linearly (or partially) ordered domains. They take time 
intervals as the primitive ontological entities and define 
truth of formulae with respect to them instead of to time 
instants. Modal operators of interval temporal logics
correspond to binary relations between pairs of intervals
(in fact, an interval temporal logic of ternary interval relations
was developed by Venema in \cite{chopping_intervals}).
In the realm of interval temporal logics, a prominent role
is accorded to Halpern and Shoham's modal logic of time 
intervals (HS), whose modalities make it possible to express 
all Allen's binary interval relations~\cite{interval_relations}.

Interval-based temporal formalisms have been extensively used in various 
areas of computer science and artificial intelligence, including
hardware specification and verification, constraint processing,
planning and plan validation, theories of action and change, and
natural language understanding. However, many applications impose
severe syntactic and semantic restrictions that considerably weaken 
their expressive power.
Interval temporal logics relax these restrictions, thus allowing 
one to express much more complex temporal properties. Unfortunately, 
most of them, including HS and the majority of its fragments,
turn out to be undecidable (a comprehensive survey on interval 
logics can be found in \cite{Goranko04}; an up-to-date
picture of decidability and undecidability results about them 
can be obtained from~\cite{thesisdario,thesispietro}). 


One of the few cases of a decidable temporal logic with genuine interval
semantics, that is, not reducible to point-based semantics, is the
propositional logic of temporal neighborhood (Propositional
Neighborhood Logic, PNL for short), interpreted over various classes 
of temporal structures, including all, dense, discrete, and finite 
linear orders, as well as rational, integer, and natural 
numbers~\cite{Goranko03a}. 
PNL is the fragment of HS featuring two modalities corresponding to 
Allen's relations {\em meets} and {\em met by} (the one is the inverse
of the other). Decidability of PNL with respect to various classes of
linear orders has been proved in \cite{Bresolin09b} via a reduction to 
the satisfiability problem for the two-variable fragment of first-order 
logic for binary relational structures over ordered domains \cite{otto:2001a}. 
Decidability of PNL with respect to other classes of linear orders via
a direct model-theoretic argument has been recently shown in \cite{rr01}, 
where tableau-based optimal decision procedures for PNL, interpreted 
in the considered classes of linear orders, have also been developed.

Despite its seeming simplicity, PNL is well-suited for a number of 
concrete application domains. One of them is that of transaction-time 
databases (also called append-only databases), that keep track of 
the sequence of timestamped versions of the database, where 
information is never removed and new information is appended 
to existing information, respecting the temporal ordering. However, 
in such an application domain as well in various other ones, 
a metric dimension turns out to be a very useful ingredient.
A metric extension of PNL has been developed by Bresolin et al.\
in~\cite{sosym}. The resulting interval temporal logic, called Metric 
PNL (MPNL for short), pairs PNL modalities with a family of special 
atomic propositions expressing integer constraints (equalities and 
inequalities) on the length of the intervals over which they are 
evaluated. The authors show that the satisfiability 
problem for MPNL, interpreted over natural numbers, is decidable.
However, they leave the precise characterization of its complexity as 
an open problem. Metric constraints in MPNL are expressed 
in terms of some $k \in \mathbb N$. When $k$ is a constant of the 
formula or it is expressed in unary, MPNL is NEXPTIME-complete, but 
when $k$ is expressed in binary, then the satisfiability problem for 
MPNL has been shown to be somewhere in between EXPSPACE and 2NEXPTIME
only.


In this paper, we focus our attention on MPNL with a binary encoding of 
metric constraints. We first provide an original model-theoretic proof 
of the decidability of its satisfiability problem over finite 
linear orders, natural numbers, and integer numbers. As a
matter of fact, the proof gives us a doubly-exponential 
upper bound to the size of the (pseudo-)model for the input
MPNL formula (if any),
when interpreted in the linear orders under consideration.
Then, we devise an EXPSPACE decision procedure
for MPNL, interpreted over the integer numbers, and we show how
to adapt it to the cases of finite linear orders and natural numbers.
EXPSPACE-completeness immediately follows from the already known
EXPSPACE-hardness of the problem.
As a by-product, we solve the issue about the exact 
complexity of MPNL, with a binary encoding of 
metric constraints, interpreted over the natural numbers, 
which was left open in~\cite{sosym}. 
Moreover, since MPNL is expressively complete for a fragment of 
first-order logic with two variables and one successor function, 
interpreted over the same classes of linear orders~\cite{sosym}, 
the proposed decision procedure can be used to check the satisfiability 
of formulae of such a logic as well.

The paper is organized as follows. In Section \ref{sec:logic}, we 
introduce the logic. Then, in Section \ref{sec:types}, we provide 
some basic definitions and results to be used in the following. In 
Section \ref{sec:decfin}, we prove the decidability of the 
satisfiability problem for MPNL over finite linear orders.
In the following two sections, we generalize such a 
result to the cases of natural and integer numbers
by showing that every satisfiable formula has a model
that can be represented with a suitable small ``generator''.
Finally, in Section \ref{sec:decisionproc}, we outline an EXPSPACE
decision procedure for satisfiability checking in the most general
case of integer numbers, which can be easily tailored  to the
cases of finite linear orders and natural numbers.

\section{The logic MPNL}\label{sec:logic}
The logic MPNL can be viewed as a natural metric extension of PNL.
The language of PNL consists of a set $\AP$ of atomic propositions, the
propositional connectives $\neg$ and $\vee$, and the modal operators
$\dr$ and $\dl$ for Allen's relations \emph{meets} and
\emph{met by}, respectively~\cite{interval_relations}.
Representation theorems, axiomatic systems, and decidability
results for PNL, interpreted over various classes of linear orders,
have been given in~\cite{Bresolin09b,Goranko03a}. An optimal
tableau-based method for deciding the satisfiability problem
for the future fragment of PNL (RPNL) over the natural numbers has
been presented in \cite{bresolin07b}, and later extended to
the full PNL over the integers in~\cite{bresolin07a}, while
an optimal tableau system for RPNL over the class of all linear
orders can be found in~\cite{Bresolin08a}.
Optimal tableau-based decision procedures for PNL, interpreted over
various classes of linear orders, can be found in \cite{rr01}.

An extension of PNL, interpreted over the natural numbers, with
(a limited set of) metric constraints has been defined and
systematically studied in~\cite{sosym} (as a matter of fact,
a metric extension of RPNL was first considered in
\cite{rpnl_with_constraints}).
Let $\delta$ be the {\em distance} function over natural numbers
defined as $\delta(x,y) = |x-y|$ (the same definition applies to
any finite linear order and to the integer numbers).
Metric PNL (MPNL) is obtained
from PNL by adding a set of (pre-interpreted) atomic propositions
for length constraints. These propositions allow one to constrain
the length of the current interval and can be viewed as the
natural metric generalization of the modal constant $\pi$ of
propositional interval logics~\cite{Goranko03a}, which evaluates
to true precisely over point-intervals.
Formally, for each $\sim \in \{ <,\ \leq,\ =,\ \geq,\ > \}$, MPNL
features a length constraint $\len{\sim}{k}$, whose semantics is
defined as follows:
$M,[x,y] \mmodels \len{\sim}{k} \text{ iff } \delta(x,y) \sim k$.
Hereafter, we limit ourselves to one type of metric
constraints only, namely, $\lenlt{k}$, as all the
remaining ones can be expressed in terms of it.
As an example, we have that $M,[x,y] \mmodels \leneq{k}
\Leftrightarrow M,[x,y] \mmodels \lenlt{k+1} \wedge \neg \lenlt{k}$.
Formulae of MPNL (denoted by $\varphi,\psi, \ldots$) are generated by
the following grammar:
$$\varphi ::= \lenlt{k} \mid p \ |\ \neg\varphi\ |\ \varphi \vee
\varphi\ |\ \dl \varphi\mid \dr \varphi, \text{ where  $p\in \AP$
and $k\in \bbN$}.$$
The other propositional connectives, the logical constants
$\top$ ($true$) and $\bot$ ($false$), and the dual modal operators
$\Dr$ and $\Dl$ are defined as usual. Moreover, the modal constant
$\pi$ can be defined as $\lenlt{1}$.

\medskip

Given a linearly-ordered domain $\mathbb D  = \langle D,<\rangle $, a ({\em non-strict})
\emph{interval} over $\mathbb D$ is an ordered pair $[x,y]$, with $x
\leq y$. We denote by $\mathbb I(\mathbb D)$ the set of all intervals 
over $\mathbb D$. Moreover, we denote by $y_{max}$ the greatest point in $D$ (if there 
is not such a point, we put $y_{max}=+\infty$) and by $y_{min}$ the 
least point in $D$ (if there is not such a point, we put $y_{min}=-\infty$).
The semantics of MPNL is given in terms of {\em models} of the form
$M=\langle\mathbb D,V\rangle$, where $V: \AP
\rightarrow 2^{\mathbb I(\mathbb D)}$ is a valuation function
assigning a set of intervals to every atomic proposition. From
now on, we assume the domain $D$ to be either $\bbZ$, $\bbN$, or
a finite prefix of $\bbN$.
We recursively define the truth relation $\mmodels$ as follows:

\begin{itemize}

\item $M,[x,y]\mmodels p$ iff $[x,y] \in V(p)$, for any $p\in\AP$;
\item $M,[x,y]\mmodels len_{<k}$ iff $\delta(x,y)<k$;
\item $M,[x,y]\mmodels \neg\varphi$ iff it is not the case that $M,[x,y]\mmodels\varphi$;
\item $M,[x,y]\mmodels \varphi\vee\psi$ iff $M,[x,y]\mmodels\varphi$ or $M,[x,y]\mmodels\psi$;
\item $M,[x,y]\mmodels \dl\varphi$ iff there exists $z\le x$ such that $M,[z,x]\mmodels\varphi$;
\item $M,[x,y]\mmodels \dr\varphi$ iff there exists $z\ge y$ such that $M,[y,z]\mmodels\varphi$.
\end{itemize}

\noindent An MPNL-formula $\varphi$ is said to be {\em satisfiable} if there exist a 
model $M=\langle\mathbb D,V\rangle$ and an interval $[x,y] \in \mathbb I(\mathbb D)$ 
such that $M,[x,y]\mmodels\varphi$.

\medskip

In~\cite{sosym}, the satisfiability problem for MPNL has been shown to be decidable when
interpreted over the set of natural numbers. More precisely, it has been shown that the
satisfiability problem for MPNL over the set of natural numbers is NEXPTIME-complete when
either the maximal $k$ that occurs in metric constraints is a constant or the parameter $k$
of metric constraints is represented in unary, and it is in between EXPSPACE and 2NEXPTIME when
the parameter $k$ is represented in binary. In the following, we will show that the 
satisfiability problem for MPNL, with a binary encoding of metric constraints, 
interpreted over finite linear orders, the natural 
numbers, and the integer numbers, is actually EXPSPACE-complete, by developing an 
EXPSPACE decision procedure for it. It is worth noticing that the model-theoretic 
argument behaves, in a way, worse than the one in~\cite{sosym}, as it 
provides a doubly-exponential upper bound on the size of (pseudo-)models,
regardless of the representation of $k$. Nevertheless, we will show that
in the search for a (pseudo-)model of a given formula, at any time it suffices 
to keep track of a portion of it that can be recorded in exponential 
space, thus obtaining an EXPSPACE decision procedure. 

\medskip

\section{Atoms, types, dependencies, and compass structures}\label{sec:types}
In this section, we introduce the basic logical machinery to be used in the following
sections.
Let $M=\langle\mathbb D, V\rangle$ be a model for an MPNL-formula $\varphi$.
In the sequel, we relate every interval in $M$ to the set of sub-formulae of $\varphi$
it satisfies. To do that, we introduce the key notions of $\varphi$-{\em atom} and
$\varphi$-{\em type}.
First of all, we define the \emph{closure} $\closure(\varphi)$ of $\varphi$ as the
set of all sub-formulae of $\varphi$ and of their negations (we identify $\neg\neg\alpha$
with $\alpha$, $\neg\dr\alpha$ with $\Dr\neg\alpha$, and so on), and we define
$\cK_{\varphi}=\{k\ |\ len_{<k}\in \closure(\varphi)\}$ as the set of all metric
parameters that appear in $\varphi$.

\begin{definition}\label{def:atom}
A \emph{$\varphi$-atom} is any non-empty set $F\subseteq \closure(\varphi)$ such that:
\begin{compactenum}
\item for every $\alpha\in\closure(\varphi)$, we have $\alpha\in F$ iff $\neg\alpha\nin F$,
\item for every $\gamma=\alpha\vel\beta\in\closure(\varphi)$, we have $\gamma\in F$ iff $\alpha\in F$ or $\beta\in F$, and
\item for every $k,k'$ in $\cK_{\varphi}$ such that $k < k'$, we have that $len_{<k}\in A$ implies $len_{<k'}\in A$.
\end{compactenum}
\end{definition}

Intuitively, a \emph{$\varphi$-atom} is a maximal {\sl locally consistent} set of formulas chosen
from $\closure(\varphi)$. Note that the cardinality of  $\closure(\varphi)$ is {\sl linear} in
the length $\lenf{\varphi}$ of $\varphi$, while the number of $\varphi$-atoms is {\sl at most
exponential} in $\lenf{\varphi}$ (precisely,
we have that $\lenf{\closure(\varphi)}$ is at most $2\lenf{\varphi}$
and there are at most $2^{\lenf{\varphi}}$ distinct atoms).
We define $\cA_\varphi$ as the set of all possible atoms that can
be built over $\closure(\varphi)$.
For every model $M$ and every interval $[x,y]\in\mathbb I(\mathbb D)$,
we associate the set of all formulas $\psi\in\closure(\varphi)$
such that $M,[x,y]\sat\psi$ with $[x,y]$. We call such a set the
\emph{$\varphi$-type} of $[x,y]$ and we denote it by $\type_M([x,y])$.
We have that every $\varphi$-type is a $\varphi$-atom, but not vice
versa. Hereafter, $\varphi$-atoms (resp., $\varphi$-types) will be
simply called atoms (resp., types). Given an atom $F$, we denote by
$\obs_r(F)$ (resp., $\obs_{l}(F)$ ) the set of all \emph{future}
(resp., \emph{past}) \emph{observable formulae} of $F$, namely, the
set of formulae $\psi\in F$ such that $\dr\psi\in \closure(\varphi)$
(resp., $\dl\psi\in \closure(\varphi)$). Similarly, given an atom $F$, we denote by
$\req_r(F)$ (resp., $\req_{l}(F)$) the set of all \emph{$\dr$-requests}
(resp., \emph{$\dl$-requests}) of $F$, namely, the set of formulae
$\psi\in\closure(\varphi)$ such that $\dr \psi\in F$ (resp., $\dl
\psi\in F$), and we use the shorthand $\req(F)$ for
$\req_r(F)\cup\req_{l}(F)$.
Making use of the above notions, we can define the following
relation between two atoms $F$ and $G$:
$$
\begin{array}{rcl}
  F \dep{R} G
  &\;\text{iff}\;&
  \obs_{r}(G) \,\subseteq\, \req_r(F) \mbox{ and } \obs_{l}(F) \,\subseteq\, \req_{l}(G)
\s1
\end{array}
$$
The relation $\dep{R}{}$ satisfies a \emph{view-to-type dependency},
that is, for every pair of intervals $[x,y],[x',y']$ in
$\mathbb{I(D)}$, we have that  $ y = x'
\;\text{implies }   \type_M([x,y]) \,\dep{R}{}\, \type_M([x',y'])$.

\begin{figure}[!!t]
\centering
\begin{tikzpicture}[scale=0.8]
\draw[very thick,|-|] (0,0) -- (2,0)node[pos=0.5, above] {$[x_0,y_0]$};
\draw[very thick,|-|,red] (-3,0.8) -- (2,0.8)node[pos=0.5, above=0.001cm,black] {$[x_3,y_3], \neg len_{<k+1}$};
\draw[very thick,|-|,blue] (2,-0.8) -- (4,-0.8)node[pos=0.5, above=0.001cm,black] {$[x_1,y_1]$};
\draw[very thick,|-|,green] (-3,-0.8) -- (0,-0.8)node[pos=0.5, above=0.001cm,black] {$[x_2,y_2], len_{<k}$};
\pgftransformshift{\pgfpoint{8cm}{-0.5cm}}
\draw[step=0.5cm,gray,very thin] (-2.9,-2.9) grid (2.9,2.9);
\fill[color=white] (-2.9,-2.9) -- (2.9,2.9) -- (2.9,-2.9);
\draw (-2.9,-2.9) -- (2.9,2.9);
\draw[dashed] (-0.5,1) -- (-0.5, -0.5);
\draw[dashed] (1,2) -- (1, 3);
\draw[dashed] (-0.5,1) -- (1, 1);
\draw[dashed] (1,2) -- (1, 1);
\draw[dashed] (-3,-0.5) -- (-0.5, -0.5);
\draw[dashed] (-2,1) -- (-2, -2);

\node[shape=circle,draw=black,inner sep=2pt,fill=black, label={above :$(x_0,y_0)$}](A) at (-0.5,1) {};
\node[shape=circle,draw=black,inner sep=2pt,fill=red, label={left :$(x_3,y_3)$}](A) at (-2,1) {};
\node[shape=circle,draw=black,inner sep=2pt,fill=blue, label={above :$(x_1,y_1)$}](A) at (1,2) {};
\node[shape=circle,draw=black,inner sep=2pt,fill=green, label={above right:$(x_2,y_2)$}](A) at (-2,-0.5) {};

\draw[decorate,decoration={brace,amplitude=0.4cm}, color=black] 	(-2,-2) -- (-2,0.5) node[pos=0.5,left=0.3cm]{$k$};

\end{tikzpicture}
\caption{Correspondence between intervals and the points of the compass structure.}
\label{fig:compassstructure}
\end{figure}
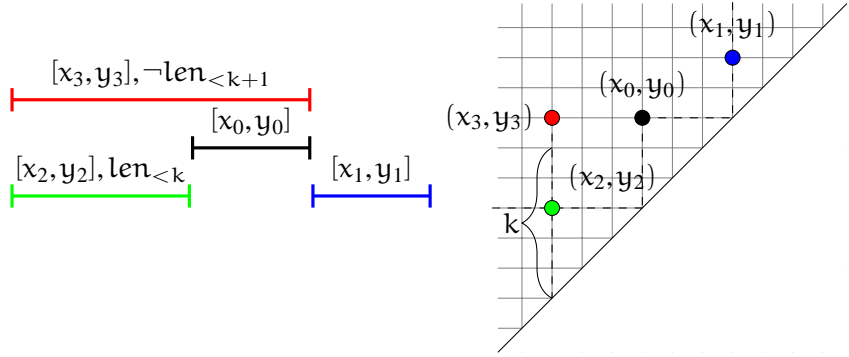

\medskip

We provide now a natural interpretation of MPNL over grid-like
structures ({\em compass structures}) by exploiting the existence
of a natural bijection between the intervals $[x,y]$ and the
points $(x,y)$ of a $D\times D$ grid with $x \leq y$. Such
an interpretation was originally proposed by Venema in~\cite{compass_logic},
and it can be given for HS and all its fragments as well.
As an example, Figure~\ref{fig:compassstructure} shows four intervals
$[x_0,y_0],...,[x_3,y_3]$ such that (i) $y_0 = x_1$, (ii)
$x_0 = y_2$, (iii) the length of $[x_2,y_2]$ is less
than $k$, and (iv) the length of $[x_3,y_3]$ is greater than $k$, together
with the corresponding points $(x_0,y_0),...,(x_3,y_3)$ of the grid (notice that
Allen's interval relations \emph{meets} and \emph{met by} are mapped into the
corresponding spatial relations between pairs of points).
Such an alternative interpretation of MPNL over compass
structures will be exploited in the decidability proofs
to make them easier to understand.

\begin{definition}\label{def:compassstructure}
Given an $MPNL$ formula $\varphi$, a  \emph{compass}
$\varphi$-\emph{structure} is a pair $\cG=(\bbP_\bbD,\cL)$,
where $\bbP_\bbD$ is the set of points of the form $(x,y)$,
with $x,y \in D$ and $ x\leq y$, and $\cL$ is a function that
maps any point $(x,y)\in\bbP_\bbD$ to a $\varphi$-atom $\cL(x,y)$
in such a way that:
\begin{dotlist}
  \item for every pair of points $(x,y),(x',y')\in\bbP_\bbD$ ,
        if $y = x'$ then $\cL(x,y) \dep{R} \cL(x',y')$
        ({\bf temporal consistency});
  \item for every point $(x,y) \in\bbP_\bbD$,
  			and every $len_{<k}\in \cL(x,y)$, $y-x<k$ ({\bf
  			length consistency}).
\end{dotlist}
\end{definition}

We say that a compass $\varphi$-structure $\cG=(\bbP_\bbD,\cL)$
\emph{features} a formula $\psi$ if there exists a point $(x,y)\in\bbP_\bbD$
such that $\psi \in \cL(x,y)$. Fulfilling compass structures are defined as follows.

\begin{definition}\label{def:fulfillingcompass}
Given an $MPNL$ formula $\varphi$ and  compass $\varphi$-structure
$\cG=(\bbP_\bbD,\cL)$ for it, we say that $\cG$ is \textbf{fulfilling}
if and only if for every point $(x,y)\in\bbP_\bbD$ and every formula 
$\psi\in\req_r\bigl(\cL(x,y)\bigr)$ (resp., $\psi\in\req_l\bigl(\cL(x,y)\bigr)$), 
there exists a point $(x',y')\in\bbP_\bbD$ such that $x'=y$ (resp., $y'=x$) 
and $\psi \in \cL(x',y')$.
\end{definition}

The following proposition proves that the satisfiability problem for
$MPNL$ is reducible to the problem of deciding, for any given formula
$\varphi$, whether there exists a compass $\varphi$-structure featuring
$\varphi$. Its easy proof is left to the reader.

\begin{proposition}\label{prop:compassstructure}
An $MPNL$-formula $\varphi$ is satisfiable if
and only if there exists a fulfilling compass $\varphi$-structure that
features $\varphi$.
\end{proposition}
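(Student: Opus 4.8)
The plan is to prove the two directions of the biconditional separately, in both cases working through the natural bijection between intervals $[x,y]$ and grid points $(x,y)$ with $x \le y$.

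\proofskip
\textbf{($\Rightarrow$)} Suppose $\varphi$ is satisfiable, so there is a model $M = \langle \bbD, V\rangle$ and an interval $[x_0,y_0]$ with $M,[x_0,y_0] \mmodels \varphi$. First I would build a candidate compass structure $\cG = (\bbP_\bbD, \cL)$ by setting $\cL(x,y) = \type_M([x,y])$ for every point $(x,y) \in \bbP_\bbD$. The three things to check are: (i) each $\cL(x,y)$ is a $\varphi$-atom --- this is immediate since every $\varphi$-type is a $\varphi$-atom (stated in the excerpt), the conditions on negation, disjunction, and monotonicity of $\len{<}{k}$ all following from the truth definition; (ii) temporal consistency, i.e.\ $y = x'$ implies $\cL(x,y) \dep{R} \cL(x',y')$ --- this is exactly the view-to-type dependency already recorded in the excerpt; (iii) length consistency, i.e.\ $\len{<}{k} \in \cL(x,y)$ implies $y - x < k$ --- immediate from the semantic clause $M,[x,y] \mmodels \len{<}{k}$ iff $\delta(x,y) < k$. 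Finally I would verify that $\cG$ is fulfilling: if $\psi \in \req_r(\cL(x,y))$ then $\dr\psi \in \type_M([x,y])$, so by the semantics of $\dr$ there is $z \ge y$ with $M,[y,z]\mmodels\psi$, hence $\psi \in \cL(y,z)$ and the point $(y,z)$ witnesses the requirement (with $x' = y$); the $\req_l$/$\dl$ case is symmetric. Since $\varphi \in \type_M([x_0,y_0]) = \cL(x_0,y_0)$, the structure features $\varphi$.

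\proofskip
\textbf{($\Leftarrow$)} Conversely, suppose $\cG = (\bbP_\bbD, \cL)$ is a fulfilling compass $\varphi$-structure featuring $\varphi$, with $\varphi \in \cL(x_0,y_0)$ for some point $(x_0,y_0)$. I would define a model $M = \langle \bbD, V\rangle$ on the same domain by putting $[x,y] \in V(p)$ iff $p \in \cL(x,y)$, for each $p \in \AP$. The core claim, proved by structural induction on $\psi \in \closure(\varphi)$, is that for every point $(x,y) \in \bbP_\bbD$, $M,[x,y] \mmodels \psi$ iff $\psi \in \cL(x,y)$. The atomic-proposition case holds by definition of $V$; the $\len{<}{k}$ case uses length consistency for one direction and, for the other, the fact that $\cL(x,y)$ is an atom so it contains either $\len{<}{k}$ or $\neg\len{<}{k}$ --- combined with length consistency applied to the complementary constraints, exactly one of these is forced, giving the equivalence (here one uses that $\neg\len{<}{k}$ is identified, via the closure conventions, with the appropriate $\len{\ge}{k}$, and atom condition~3 pins down the truth value). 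Negation and disjunction are routine from the atom conditions. For $\dr\psi$: if $\dr\psi \in \cL(x,y)$ then $\psi \in \req_r(\cL(x,y))$, so by fulfillment there is a point $(y,z)$ with $\psi \in \cL(y,z)$, hence by induction $M,[y,z]\mmodels\psi$ with $z \ge y$, so $M,[x,y]\mmodels\dr\psi$; conversely if $M,[x,y]\mmodels\dr\psi$ there is $z \ge y$ with $M,[y,z]\mmodels\psi$, so $\psi \in \cL(y,z)$ by induction, hence $\psi \in \obs_r(\cL(y,z)) \subseteq \req_r(\cL(x,y))$ by temporal consistency (applied to the points $(x,y)$ and $(y,z)$, whose second-first coordinates match), so $\dr\psi \in \cL(x,y)$. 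The $\dl$ case is symmetric, using $\obs_l$ and the left half of $\dep{R}$. Applying the claim to $\psi = \varphi$ at $(x_0,y_0)$ gives $M,[x_0,y_0]\mmodels\varphi$, so $\varphi$ is satisfiable.

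\proofskip
The only mildly delicate point --- which I would treat carefully rather than as a routine step --- is the $\len{<}{k}$ base case in the ($\Leftarrow$) direction: one must check that length consistency together with atom condition~3 really forces $\len{<}{k} \in \cL(x,y)$ exactly when $y - x < k$, including the handling of the negated constraint $\neg\len{<}{k}$ under the closure identifications. Everything else is a direct unwinding of the definitions of $\dep{R}$, temporal consistency, and fulfillment, which is presumably why the excerpt deems the proof easy enough to leave to the reader. \qed
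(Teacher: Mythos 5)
Your overall strategy is the intended one (the paper itself leaves the proof to the reader): in the forward direction, label each grid point with the $\varphi$-type of the corresponding interval and check the two consistency conditions plus fulfillment; in the backward direction, read off a valuation from the labelling and prove a truth lemma by induction on $\closure(\varphi)$. The Boolean cases and both modal cases are handled correctly, and your use of the view-to-type dependency for temporal consistency and of fulfillment for the existential modalities is exactly right.

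The one step that does not go through as you state it is precisely the one you flag as delicate. For the base case of the truth lemma in the ($\Leftarrow$) direction you need $len_{<k}\in\cL(x,y)$ \emph{iff} $y-x<k$, but length consistency as written in Definition~\ref{def:compassstructure} gives only the left-to-right implication, and neither atom condition~3 (which relates positive length constraints to one another) nor the closure identifications supply the converse: nothing in the definitions forbids a label containing $\neg len_{<k}$ on a point with $y-x<k$. Your claim that ``length consistency applied to the complementary constraints'' forces the right truth value is therefore not justified, and the difficulty is not removable by a cleverer argument --- with the definitions read literally the proposition is actually false. Take $\varphi=\Dr\neg\pi$, i.e.\ $\neg\dr\lenlt{1}$, which is unsatisfiable (every point-interval $[y,y]$ witnesses $\dr\lenlt{1}$), yet the one-point compass structure with $\cL(0,0)\supseteq\{\neg\dr\lenlt{1},\,\neg\lenlt{1}\}$ is a fulfilling compass $\varphi$-structure featuring $\varphi$: length consistency holds vacuously because no positive $len_{<k}$ occurs in the label, and $\obs_r(\cL(0,0))=\req_r(\cL(0,0))=\emptyset$. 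The repair is simply to read (or restate) length consistency as the biconditional $len_{<k}\in\cL(x,y)$ iff $y-x<k$; with that reading your base case closes and the rest of your argument is complete. The ($\Rightarrow$) direction is unaffected, since types automatically satisfy the biconditional.
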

Without loss of generality, we will assume $\varphi$ to be satisfied
by the initial point-interval $0$ (resp., to belong to $\cL(0,0)$)
\cite{thesispietro}.

Given an MPNL-formula $\varphi$, we denote by $\kphi$ the maximum
$k$ occurring in $\varphi$. If there is not any $k$ in $\varphi$,
we simply put $k_{\varphi}=0$.
We assume $\kphi$, as well as any length constraint occurring in
$\varphi$, to be encoded in binary, and thus it immediately follows that
$\kphi\leq 2^{|\varphi|}$.

Given a compass $\varphi$-structure $\cG=(\bbP_\bbD,\cL)$, we define
a \emph{marking  function} $\cM: \bbP_\bbD\rightarrow \cA_\varphi \times
2^{\closure(\varphi)} \times \{0,\ldots,\kphi\}$ such that, for every
$(x,y)\in \bbP_\bbD$, $\cM(x,y)= (F, \Psi, h)$, where (i) $F = \cL(x,y)$,
(ii) $\Psi = \{ \psi\in \closure(\varphi) \ |\  \psi \in \req_r(x,x) \wedge
\forall x \leq y' \leq y  (\psi \notin \cL(x,y')) \}$, and (iii) $h$ is
defined as follows:
\[ h =\begin{cases} y-x & \text{ if $y-x<\kphi$; } \\ \kphi
& \text{otherwise.} \end{cases}\]
Notice that, for every point $(x,y)$, $\Psi$ is the set of formulae that must
belong to the labeling of points $(x,y')$, with $y' > y$ (points ``above'' 
$(x,y)$), to guarantee the fulfilling of all $\dr$-requests in $\cL(x,x)$,
that is, for each $\psi \in \Psi$, there must exist at least one point 
$(x,y')$ such that $\psi \in \cL(x,y')$).

Let $\cA_\varphi^\cM$ be the image of $\cM$. We call any triplet in
$\cA_\varphi^\cM$ a \emph{marked atom}. It can be easily shown that
$|\cA_\varphi^\cM| \leq 2^{3|\varphi|}$ ($|A_\varphi| \leq 2^{|\varphi|}$,
$|\req_r(\cL(x,x))| \leq |\varphi|$, and $\kphi\leq 2^{|\varphi|}$).


\begin{definition}\label{def:horconfiguration}
Given an $MPNL$ formula $\varphi$, a compass $\varphi$-structure
$\cG=(\bbP_\bbD,\cL)$ for $\varphi$, and $y\in D$, we define
the \emph{horizontal configuration} of $y$ in $\cG$ as a counting
function $\cC_y: \cA^\cM_\varphi\rightarrow \bbN \cup\{\omega\}$
such that for every $(F, \Psi, h) \in \cA^\cM_\varphi$,
$\cC_y(F, \Psi, h) = |\{ x \ | \ \cM(x,y)= (F, \Psi, h)\}|$.
\end{definition}

\noindent It is worth noticing that, for any given $y$, (i) there exists
a unique marked atom of the form $(F, \Psi, 0)$, with $\cC_y(F, \Psi, 0)=1$, and
(ii) for every $0 < h<\kphi$, there exists at most $1$ marked
atom of the form $(F,\Psi,h)$, and if for every marked atom $(F,\Psi,h)$,
$\cC(F,\Psi,h)=0$, then $\cC(F',\Psi',h') =0$ for every marked
atom $(F',\Psi',h')$ with $h' > h$.  On the contrary, there is not
a bound on the number of occurrences of a marked node of the
form $(F, \Psi, \kphi)$ (it can be equal to $\omega$).

Finally, we define the following equivalence relation on the
set of horizontal configurations, where $p$ and $f$ are defined
as $p=|\{ \dl \psi \in \closure(\varphi)  \}|$ and $f=|\{ \dr
\psi \in \closure(\varphi) \}|$, respectively.

\begin{definition}\label{def:horconfequiv}
Given an $MPNL$ formula $\varphi$ and a compass $\varphi$-structure
$\cG=(\bbP_\bbD,\cL)$ for it, we say that two horizontal configurations
$\cC_y$  and $\cC_{y'}$  are \emph{equivalent} (written $\cC_y\equiv\cC_{y'}$)
if and only if for every $(F,\Psi,h)\in \cA^\cM_\varphi$, either
$\cC_{y'}(A,\Psi,h)=\cC_{y}(F,\Psi,h)$ or ($h = \kphi$ and) both $\cC_{y}(F,\Psi,\kphi)
\geq p \cdot f + p$  and $\cC_{y'}(F,\Psi,\kphi) \geq p \cdot f + p$.
\end{definition}

It can be easily shown that $\equiv$
is an equivalence relation of finite index. For every marked atom 
$(F,\Psi, h)\in\cA^\cM_\varphi$, we do not distinguish between two 
configurations $\cC_y$ and $\cC_{y'}$ such that $\cC_y(F,\Psi, h)$ 
and $\cC_{y'}(F,\Psi, h)$ are different, but both greater than or equal to
$p \cdot f+p$. Hence, the number of equivalence classes in $\equiv$ 
is bounded by
\begin{equation*}
\Bigl(p\cdot f+p+1\Bigr)^{\left|\cA^\cM_\varphi\right|} \leq 
\left( \frac{|\varphi|^2}{4} + \frac{|\varphi|}{2} +1 \right)^{2^{3|\varphi|}},
\end{equation*}
since $p\cdot f+p\leq  \frac{|\varphi|^2}{4} + \frac{|\varphi|}{2}$ and $\left|\cA^\cM_\varphi\right| \leq 2^{3|\varphi|}$.

\section{Decidability of MPNL over finite linear orders}\label{sec:decfin}

\begin{figure}
\begin{tikzpicture}
\end{tikzpicture}
\end{figure}

In this section, we show that if there exists a finite fulfilling 
compass structure $\cG$ for an MPNL formula $\varphi$, then there 
exists a finite fulfilling compass structure $\cG'$ whose size is 
at most doubly exponential in the length of $\varphi$.
To prove this result, we will make use of the following lemma, 
which states that we can always shrink the size of a fulfilling 
compass structure, provided that there exist $y, y'$ such that $\cC_y
\equiv\cC_{y'}$.

\begin{lemma}\label{lem:removal}
Let $\varphi$ be an $MPNL$ formula and let $\cG=(\bbP_\bbD,\cL)$ be a 
finite fulfilling compass $\varphi$-structure which features $\varphi$. 
If there exist $\oy, \oy' \in D$, with $\oy<\oy'$, such that $\cC_{\oy}
\equiv \cC_{\oy'}$, then it is possible to build a finite fulfilling compass 
$\varphi$-structure $\cG'=(\bbP_{\bbD'},\cL')$ featuring $\varphi$ with $|D'|=
|D|-(\oy'-\oy)$.
\end{lemma}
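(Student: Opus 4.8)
The plan is to delete the ``horizontal strip'' of the compass structure between rows $\oy$ and $\oy'$ and then stitch the two halves together, using the equivalence $\cC_{\oy}\equiv\cC_{\oy'}$ to repair the labelling. Concretely, I would define a monotone surjection $\sigma\colon D\to D'$ that is the identity on points $\leq\oy$ and shifts every point $>\oy'$ down by $\oy'-\oy$ (so $\sigma(\oy')=\oy$ and the interval $(\oy,\oy']$ is collapsed), giving $|D'|=|D|-(\oy'-\oy)$. For points $(x,y)\in\bbP_{\bbD'}$ with $y\leq\oy$ we copy the old label, $\cL'(x,y)=\cL(x,y)$; for points with $y>\oy$ we want to copy the label of the corresponding point in the upper part of $\cG$, i.e.\ of $(\widehat{x},\widehat{y})$ where $\widehat{y}=y+(\oy'-\oy)$ and $\widehat{x}$ is $x$ if $x\leq\oy$ and $x+(\oy'-\oy)$ otherwise. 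The subtle case is exactly the ``mixed'' points: those $(x,y)$ with $x\leq\oy<y$, whose left endpoint lives in the lower part and whose right endpoint lives in the (shifted) upper part. These are the points whose column in $\cG'$ is glued from a column of the lower half (below row $\oy$) and a column of the upper half (above row $\oy'$), and it is precisely here that the equivalence of horizontal configurations must be invoked.

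The key step is to build, for each such mixed column $x$, a bijection-like matching between the cells of column $x$ in $\cG$ strictly above $\oy'$ and the cells we need to fill above $\oy$ in $\cG'$, respecting marked atoms. Fix $x\leq\oy$. In $\cG$ the cells $(x,y)$ with $\oy'<y$ carry marked atoms $\cM(x,y)=(F,\Psi,h)$; in $\cG'$ we must assign marked atoms to cells $(x,y')$ with $\oy<y'$. Since $\cC_{\oy}\equiv\cC_{\oy'}$, for every marked atom $(F,\Psi,h)$ with $h<\kphi$ the number of points of row $\oy$ and of row $\oy'$ carrying it agree exactly, and by the structural remarks after Definition~\ref{def:horconfiguration} these ``small-$h$'' marked atoms occupy a uniquely determined contiguous block of columns near the diagonal; the only marked atoms that can occur in unboundedly many columns are those with $h=\kphi$, and for those the equivalence only guarantees that both counts are $\geq p\cdot f+p$. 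So the matching is forced for $h<\kphi$ and has slack exactly on the $\kphi$-cells, which is fine because a $\kphi$-cell only records that $y-x\geq\kphi$, a condition that is preserved under the shift. I would then define $\cL'$ on the mixed points by transporting the $F$-component of the marked atom along this matching.

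The main obstacle, and where the bulk of the verification goes, is checking that the resulting $\cG'$ is still a \emph{fulfilling} compass $\varphi$-structure: (1) \textbf{temporal consistency} $\cL'(x,y)\dep{R}\cL'(x',y')$ whenever $y=x'$ --- within the pure lower part and the pure upper part this is inherited from $\cG$, and at the seam $y'=\oy$ one must use that the $\obs_r/\obs_l$ data at a glued point is determined by its marked atom, which the matching preserved; (2) \textbf{length consistency}, which follows because the shift never decreases $y-x$ for cells above the seam and is unchanged below it, and because the matching respects the $h$-component (hence the truth of each $\lenlt{k}$); and (3) \textbf{fulfillment} of every $\dr$- and $\dl$-request. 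For $\dr$-requests the $\Psi$-component of the marked atom is exactly the bookkeeping of ``not-yet-fulfilled future requests along this column'', so preserving $\Psi$ under the matching, together with $\Psi$ being empty high enough in every column of the finite structure $\cG$, guarantees every right request is eventually met in $\cG'$; $\dl$-requests are symmetric, handled at the row where the request is generated, which lies in the lower part and is untouched. Once these three checks go through, $\cG'$ witnesses the claim.
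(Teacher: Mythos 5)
Your overall skeleton --- delete the strip $(\oy,\oy']$, shift the upper part down, keep columns within distance $\kphi$ of the seam rigidly attached to it, and use $\cC_{\oy}\equiv\cC_{\oy'}$ to match the remaining columns at row $\oy$ with columns at row $\oy'$ carrying the same marked atom --- is exactly the first half of the paper's construction. The gap is in your treatment of fulfillment. You claim that $\dl$-requests are ``handled at the row where the request is generated, which lies in the lower part and is untouched''; this is false for the diagonal points $(y,y)$ with $y>\oy$, which after the shift correspond to $(y+(\oy'-\oy),\,y+(\oy'-\oy))$ in $\cG$. Whether a request $\psi\in\req_l(\cL'(y,y))$ is fulfilled depends on the \emph{multiset of labels of all columns crossing row $y$} in $\cG'$, and this multiset is not preserved by your matching: for a marked atom $(F,\Psi,\kphi)$ whose counts at $\oy$ and $\oy'$ are both at least $p\cdot f+p$ but unequal, the matching necessarily has slack, so some columns of $\cG$ above row $\oy'$ are dropped (and with them, possibly, the only point that fulfilled some $\psi$ at a higher row), while some columns of $\cG'$ above row $\oy$ must be filled with copies of other columns. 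Saying the slack ``is fine because a $\kphi$-cell only records that $y-x\geq\kphi$'' ignores that the full atom $F$ sitting at each such cell matters for the $\dl$-fulfilment of the row it lies on.

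Repairing this is where the real work of the lemma lies, and it is the reason the threshold in the definition of $\equiv$ is $p\cdot f+p$ rather than $1$. The paper chooses, for each high-multiplicity marked atom, a witness column together with at most $f$ \emph{essential} rows (which fulfil the witness's pending $\dr$-requests $\Psi$) and at most $p\cdot f$ \emph{blocked} columns (which fulfil the $\dl$-requests of those essential rows); these are transported faithfully, and the remaining at least $p$ spare columns with the same marked atom provide room to overwrite, at any row $y$ with an unfulfilled request $\psi$, the label of one spare column by $\cL(x'_\psi,y+(\oy'-\oy))$ --- chosen so that this neither destroys the $\dr$-fulfilment of that column (the overwritten cell is not essential for it) nor the $\dl$-fulfilment already achieved at row $y$ (any formula it contributed there is still contributed by another spare). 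Your proposal contains none of this bookkeeping, so as written it does not establish that $\cG'$ is fulfilling.
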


\begin{proof}
Suppose that $\cG=(\bbP_\bbD,\cL)$ is a finite fulfilling compass 
$\varphi$-structure which features $\varphi$ and such that there exist $\oy, \oy' \in D$, 
with $\oy<\oy'$, such that $\cC_{\oy} \equiv \cC_{\oy'}$. 
We build a compass $\varphi$-structure $\cG'=(\bbP_{\bbD'},\cL')$, with $|D'|
=|D|-(\oy'-\oy)$, by executing the following procedure.

\begin{enumerate}
\item For every $(x,y)\in \bbP_{\bbD'}$, with $y\leq\oy$, we put $\cL'(x,y)=
\cL(x,y)$.

\item For every $(x,y)\in \bbP_{\bbD'}$, with $y>\oy$ and $\oy-\kphi< x\leq y$, 
we put $\cL'(x,y)=\cL(x+(\oy'-\oy), y+(\oy'-\oy))$.

\item For every $(A,\Psi,\kphi)\in \cA^{\cM}_{\varphi}$, we define a partial 
injective function $g: \{0,\ldots,\oy-\kphi \}\rightarrow\{0,\ldots,\oy'-\kphi \}$ 
as follows:
\[ g(x)=
\begin{cases} 
x' \text{ with  $\cM(x',\oy')=(A,\Psi,\kphi )$ } & \text{ if  
$\cM(x,\oy)=(A,\Psi,\kphi )$ and } \\ 
& \text{ $\cC_{\oy}(A,\Psi,\kphi)=\cC_{\oy'}(A,\Psi, \kphi)$ } \\ 
undefined & \text{ otherwise  } 
\end{cases}  
\]
By injectivity of $g$, every $x$ (where $g$ is defined) is associated with 
a distinct $x'$. Moreover, since $\cC_{\oy}(A,\Psi,\kphi)=\cC_{\oy'}(A,\Psi, \kphi)$, 
for every $x'$ such that $\cM(x',\oy')=(A,\Psi,\kphi)$, there exists (a unique) $x$ 
such that $g(x) = x'$. Now, for every $0\leq x\leq \oy-\kphi$ such that $g(x)$ is 
defined, we put $\cL'(x,\oy+i)=\cL(g(x),\oy'+i)$ for every $1\leq i\leq y_{max}-\oy'$.


\item For every $(A,\Psi,\kphi)\in \cA^{\cM}_{\varphi}$ such that $\cC_{\oy'}(A,\Psi,
\kphi)\geq p\cdot f+p$, we choose a ``witness'' $w_{(A,\Psi)}$ such that $\cM(w_{(A,
\Psi)},\oy')=(A,\Psi,\kphi)$. Then, we identify a minimal set of \emph{essential 
elements} $\Essentials^{\oy'}_{(A,\Psi)}= \{ y'_1,\ldots,y'_{m} \}$ such that, 
for every $\psi\in \Psi$, there exists a point $y'_{j}\in \Essentials^{\oy'}_{(A,\Psi)}$ 
with $\psi \in \cL(w_{(A,\Psi)},y'_j)$. As $|\Psi| \leq f$, it immediately follows
that $m\leq f$. Moreover, by definition of (the second component of a) marked atom,
$y'_i > \oy'$ for every $1 \leq i \leq m$. 
%
%
%
%
Now, let $\Blocked^{\oy'}_{(A,\Psi)} = \{ x'_1,\ldots,x'_{m'} \}$ be a minimal set 
of elements, called \emph{blocked elements}, satisfying the following condition:
for every  $1\leq i \leq m$ and every $\psi \in \req_l(y'_i,y'_i)$, if there exists 
$x' \in D$ such that $\psi \in \cL(x',y'_i)$ and $\cM(x', \oy')=(A,\Psi,\kphi)$,
then there exists $x_j' \in \Blocked^{\oy'}_{(A,\Psi)}$ such that $\psi \in \cL(x'_j,
y'_i)$ and $\cM(x'_j, \oy')=(A,\Psi,\kphi)$. As $m \leq f$ and $|\req_l(y'_i,y'_i)| 
\leq p$, $|\Blocked^{\oy'}_{(A,\Psi)}|\leq p \cdot f$.
Since $\cC_{\oy} \equiv \cC_{\oy'}$, a set $\Blocked^{\oy}_{(A,\Psi)}=
\{ x_{1},\ldots, x_{m'} \}$ exists such that, for every $1\leq i\leq m'$, 
$\cM(x_i,\oy) = (A,\Psi,\kphi) (= \cM(w_{(A,\Psi)},\oy'))$. 
For every $1\leq i\leq m'$ and every $1\leq j\leq y_{max}-\oy'$, we put $\cL'(x_i,
\oy+j)=\cL(x'_i,\oy'+j)$ . 
In such a way, all points $(x_i,y)$ in $\cG'$, with $1\leq i \leq m'$, turn out to be
labeled and all $\dr$-requests of points $(x_i,x_i)$ are fulfilled.
\item Once the above steps have been executed, there may exist some 
$x \in D$ such that the labeling of points $(x,y) \in \bbP_{\bbD'}$,
with $y>\oy$, is still undefined. Let $\cM(x, \oy) = (A,\Psi,\kphi)$.
By construction, $\cC_{\oy}(\cM(x, \oy))\geq p \cdot f+p$. For every
unlabeled point $(x,y)$, we put $\cL'(x, y) = \cL(w_{(A, \Psi)}, y + 
(\oy'-\oy))$, where $w_{(A,\Psi)}$ is the witness chosen at step 4.
\end{enumerate}

Unfortunately, there is no guarantee that all $\dl$-requests 
are fulfilled in $\cG'$. Let $y>\oy$ such that there exists 
$\psi \in \req_{l}(\cL'(y,y))$ which is not fulfilled in $\cG'$. 
By construction, $\cL'(y,y) = \cL(y+(\oy'-\oy), y+(\oy'-\oy))$, 
and thus, since $\cG$ is fulfilling, there exists a point 
$(x'_\psi, y+(\oy'-\oy))$ such that $\psi \in \cL(x'_\psi,
y+(\oy'-\oy))$. We must distinguish two cases:
\begin{compactenum}
\item[a)] for every witness $w_{(A,\Psi)}$, $y+(\oy'-\oy)\notin 
\Essentials^{\oy'}_{(A,\Psi)}$. Let $(A,\Psi, \kphi)$ be the 
marked atom associated with $(x'_\psi, \oy')$ in $\cG$, that is,
$\cM(x'_\psi, \oy') = (A,\Psi, \kphi)$. It holds that 
$\cC_{\oy'}(\cM(x'_\psi, \oy'))\geq p\cdot f+p$ (if this was not 
the case,  $x'_\psi$ would not belong to the range of $g$, thus 
violating the properties we impose on it at step 3), and thus
$\cC_{\oy}(\cM(x'_\psi, \oy))\geq p\cdot f+p$ as well. Since
$|\Blocked^{\oy}_{(A,\Psi)}| (= |\Blocked^{\oy'}_{(A,\Psi)}|) 
\leq p \cdot f$, there exist at least $p$ elements $x_{m'+1}, 
\ldots, x_{m'+p}$ such that, for $1 \leq i \leq p$, $x_{m'+i} 
\not\in \Blocked^{\oy}_{(A, \Psi)}$ and $\cM(x_{m'+i}, \oy) 
(= \cM(w_{(A,\Psi)}, \oy')) = (A,\Psi,\kphi)$. We show that,
in order to fulfill $\psi$, the labeling 
of at least one among $(x_{m'+1},y), \ldots, (x_{m'+p},y)$ 
can be suitably updated. To this end, it suffices to observe 
that $|\req_l(\cL'(y,y))|\leq p$ and thus there exists $1 \leq 
j \leq p$ such that, for every $\theta \in \req_l(\cL'(y,y))$,
if $\theta \in \cL'(x_{m'+j},y)$, then $\theta \in \cL'(x_{m'+l},
y)$, for some $0 \leq l \leq p$, with $l \neq j$,  as well.
Moreover, since $y+(\oy'-\oy)\notin \Essentials^{\oy'}_{(A,\Psi)}$,
for every $\phi \in \Psi$, there exists $y' (> \oy) \neq y$ such 
that $\phi \in \cL'(x_{m'+j},y')$ and thus $(x_{m'+j},y)$ is not needed to 
fulfill $\dr$-requests in $\req_r(\cL'(x_{m'+j},x_{m'+j}))$. Hence, we can safely
revise $\cL'(x_{m'+j},y)$ putting $\cL'(x_{m'+j},y)=\cL(x'_\psi,y+(\oy'-\oy))$;

\item[b)] there exists a witness $w_{(\overline{A},\overline{\Psi})}$ 
such that $y+(\oy'-\oy)\in \Essentials^{\oy'}_{(\overline{A},
\overline{\Psi})}$. Let $(A,\Psi, \kphi)$ be the marked atom 
associated with $(x'_\psi, \oy')$ in $\cG$, and let $(x_{m'+1},y), 
\ldots, (x_{m'+p}, y)$ be the $p$ elements of case a). As above, 
we can show that, to fulfill $\psi$, the labeling of at least 
one among them, say $(x_{m'+j},y)$, can be suitably updated. 
The irrelevance of $(x_{m'+j},y)$ with 
respect to requests in $\req_l(\cL'(y,y))$ can be proved in exactly 
the same way. To complete the proof, it suffices to show that 
$y+(\oy'-\oy) \notin \Essentials^{\oy'}_{(A,\Psi)}$. By 
contradiction, assume that $y+(\oy'-\oy)\in \Essentials^{\oy'}_{(A,\Psi)}$.
This implies that there exists $x_{i}\in \Blocked^{\oy}_{(A,
\Psi)}$ such that $\psi \in \cL'(x_{i}, y)$, and thus 
$\psi$ is fulfilled in $\cG'$ (contradiction). Then, we 
can proceed as in case a) and rewrite $\cL'(x_{m'+j},y)$ 
as $\cL(x'_\psi,y+(\oy'-\oy))$.
\end{compactenum}
$\cG'$ is a fulfilling compass $\varphi$-structure for $\varphi$.
\end{proof}

By exploiting Lemma \ref{lem:removal}, we can prove that a formula $\varphi$ is satisfiable 
by a finite compass structure iff it is satisfiable by a finite compass 
structure whose horizontal configurations are pairwise non-equivalent. 

\begin{theorem}\label{teo:finiterepresentation}
Let $\varphi$ be an $MPNL$-formula. If there exists a finite fulfilling  
compass $\varphi$-structure $\cG=(\bbP_\bbD,\cL)$  which features $\varphi$, 
then there exists a finite fulfilling compass $\varphi$-structure $\cG'=
(\bbP_{\bbD'},\cL')$ featuring $\varphi$ such that $\lenf{D'} \leq 
\left( \frac{|\varphi|^2}{4} + \frac{|\varphi|}{2} +1 \right)^{2^{3|\varphi|}}$.
\end{theorem}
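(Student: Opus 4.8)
The plan is to iterate Lemma~\ref{lem:removal} until no two horizontal configurations are equivalent. Starting from the given finite fulfilling compass $\varphi$-structure $\cG = (\bbP_\bbD, \cL)$ featuring $\varphi$, if there exist $\oy < \oy'$ in $D$ with $\cC_{\oy} \equiv \cC_{\oy'}$, then Lemma~\ref{lem:removal} produces a finite fulfilling compass $\varphi$-structure featuring $\varphi$ whose domain has strictly fewer points. Since $|D|$ is a natural number, this process must terminate after finitely many steps, yielding a finite fulfilling compass $\varphi$-structure $\cG' = (\bbP_{\bbD'}, \cL')$ that features $\varphi$ and in which all horizontal configurations are pairwise non-equivalent, i.e.\ $y \neq y'$ implies $\cC_y \not\equiv \cC_{y'}$.

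It then remains to bound $\lenf{D'}$. Since the map $y \mapsto \cC_y$ is injective on $D'$ (as $\equiv$-classes, pairwise distinct means the underlying configurations land in distinct classes), $\lenf{D'}$ is at most the index of $\equiv$. By the counting argument already carried out in the excerpt right before Definition~\ref{def:horconfiguration}'s discussion, the number of equivalence classes of $\equiv$ is bounded by $\bigl(p\cdot f+p+1\bigr)^{\lenf{\cA^\cM_\varphi}}$, and using $p\cdot f+p \leq \tfrac{|\varphi|^2}{4} + \tfrac{|\varphi|}{2}$ together with $\lenf{\cA^\cM_\varphi} \leq 2^{3|\varphi|}$, this is at most $\left( \tfrac{|\varphi|^2}{4} + \tfrac{|\varphi|}{2} +1 \right)^{2^{3|\varphi|}}$. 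Hence $\lenf{D'} \leq \left( \tfrac{|\varphi|^2}{4} + \tfrac{|\varphi|}{2} +1 \right)^{2^{3|\varphi|}}$, as required.

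The only subtlety worth spelling out is the termination/injectivity argument: after the final application of the lemma, \emph{every} pair $y \neq y'$ in $D'$ satisfies $\cC_y \not\equiv \cC_{y'}$, so the assignment $y \mapsto [\cC_y]_{\equiv}$ is injective into the quotient set, giving the bound on $\lenf{D'}$ directly. There is essentially no obstacle here beyond invoking Lemma~\ref{lem:removal} correctly and observing that the heavy lifting (the construction and the verification that fulfillingness is preserved) has already been done in that lemma; the finite-index property of $\equiv$ and the explicit bound are likewise already established in Section~\ref{sec:types}. So the main point to get right is simply that the process applies as long as any equivalent pair exists and that $|D|$ strictly decreases at each step, forcing termination at a structure with the claimed size bound.
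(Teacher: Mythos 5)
Your proposal is correct and follows essentially the same route as the paper: repeatedly apply Lemma~\ref{lem:removal} while some pair of equivalent horizontal configurations exists (termination being forced by the strict decrease of $\lenf{D}$), and then bound the size of the resulting structure by the index of $\equiv$, which was already computed in Section~\ref{sec:types}. Your explicit remark that $y \mapsto [\cC_y]_{\equiv}$ is injective on the final domain is just a slightly more careful phrasing of the paper's concluding step.
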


\begin{proof}
Let $\cG=(\bbP_\bbD,\cL)$ be a finite fulfilling compass $\varphi$-structure featuring 
$\varphi$ and suppose that $\lenf{D} > \left( \frac{|\varphi|^2}{4} + 
\frac{|\varphi|}{2} +1 \right)^{2^{3|\varphi|}}$. Since the index of $\equiv$ 
is smaller than $\lenf{D}$, there exist $\oy, \oy' \in D$, with
$\oy<\oy'$, such that $\cC_y\equiv \cC_y'$. Then, we exploit Lemma~\ref{lem:removal} 
to build a smaller compass $\varphi$-structure $\cG_1 = (\bbP_{\bbD_1},\cL_1)$ with 
$\lenf{D_1} = \lenf{D} - (\oy' - \oy)$. By iterating such a contraction step, we 
eventually obtain a compass $\varphi$-structure $\cG_n = (\bbP_{\bbD_n},\cL_n)$ whose horizontal 
configurations are pairwise non-equivalent. Since the number of equivalence classes 
in $\equiv$ is less than or equal to $\left( \frac{|\varphi|^2}{4} + \frac{|\varphi|}{2} 
+1 \right)^{2^{3|\varphi|}}$, the thesis immediately follows.
\end{proof}

\section{Decidability of MPNL over the naturals}\label{sec:decnat}

We now extend the result of the previous section to cope with 
the satisfiability problem for MPNL over $\bbN$.
First, we identify a subset of finite compass $\varphi$-structures, called 
\emph{compass generators}, which turn out to be crucial for 
decidability.

\begin{definition}\label{def:compassgennat}
Let $\varphi$ be an MPNL formula. An \emph{$\bbN$-compass generator} 
for $\varphi$ is a finite compass $\varphi$-structure 
$\cG=(\bbP_\bbD,\cL)$, which features $\varphi$, that 
satisfies the following conditions:
\begin{compactenum}
	\item all $\dl$-requests of every point $(x,y)\in\bbP_\bbD$ are fulfilled;
	\item there exists $y_{inf}$, with $y_{max} - y_{inf} \geq k_\varphi$, such that:
		\begin{compactenum}
			\item for every $(F,\Psi, h) \in \cA^{\cM}_\varphi$, if $\cC_{y_{max}}(F,\Psi, h) >0$,
			then $\cC_{y_{inf}}(F,\Psi, h) >0$, and
			\item $\cM(x,y_{max})=(F,\emptyset, h)$, for every $0\leq x \leq y_{inf}$.
		\end{compactenum}
\end{compactenum}
\end{definition}

\begin{theorem}\label{teo:satiffcompassgen}
An MPNL formula $\varphi$ is satisfiable over $\bbN$ iff there exists
an $\bbN$-compass generator for it.
\end{theorem}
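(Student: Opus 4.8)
The plan is to establish both directions of the equivalence, with the forward (``only if'') direction being the substantive one. For the easy ``if'' direction, suppose $\cG$ is an $\bbN$-compass generator for $\varphi$. I would build an infinite fulfilling compass structure over $\bbN$ by taking $\cG$ up to row $y_{inf}$ verbatim, and then repeatedly ``pumping'' the block of rows strictly between $y_{inf}$ and $y_{max}$: because condition~2(a) guarantees that every marked atom occurring on row $y_{max}$ already occurs on row $y_{inf}$, and condition~2(b) ensures that the columns $0,\ldots,y_{inf}$ carry empty $\Psi$-components on row $y_{max}$ (so no $\dr$-request originating low down is left pending at the top of the generator), the finite block can be concatenated with itself infinitely often while preserving temporal consistency across the seams and fulfilling all $\dr$-requests; condition~1 takes care of $\dl$-requests, which can only point downward and are already satisfied within the first copy. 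One checks length consistency is inherited since $h$ is computed the same way in each copy. This yields a fulfilling compass structure over $\bbN$ featuring $\varphi$, so by Proposition~\ref{prop:compassstructure}, $\varphi$ is satisfiable over $\bbN$.

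For the ``only if'' direction, assume $\varphi$ is satisfiable over $\bbN$; by Proposition~\ref{prop:compassstructure} there is a fulfilling compass $\varphi$-structure $\cG=(\bbP_\bbN,\cL)$ featuring $\varphi$. The goal is to extract from $\cG$ a finite generator. First I would use a pigeonhole argument: since $\equiv$ has finite index (Definition~\ref{def:horconfequiv}), among the infinitely many horizontal configurations $\cC_0,\cC_1,\cC_2,\ldots$ some equivalence class recurs infinitely often; moreover, the set of marked atoms that appear infinitely often on rows of $\cG$ stabilizes — there is a row threshold $y^*$ beyond which only those ``persistent'' marked atoms occur. Pick $y_{inf}$ to be such a recurrence point that also lies above $y^*$, and pick $y_{max}$ to be a later recurrence point with $\cC_{y_{inf}} \equiv \cC_{y_{max}}$ and $y_{max} - y_{inf} \geq \kphi$; I also arrange that every marked atom appearing on any row $\geq y_{inf}$ actually appears on row $y_{inf}$ itself (possible since only persistent atoms remain and each reappears infinitely often — take $y_{inf}$ large enough to collect them all), which gives condition~2(a). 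The truncation $\cG' := \cG \restriction \{(x,y) : y \leq y_{max}\}$ is then a finite compass structure featuring $\varphi$; condition~1 holds because $\dl$-requests point downward and were fulfilled in $\cG$, hence survive truncation; length and temporal consistency are inherited trivially.

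The remaining obstacle — and the step I expect to be the main technical point — is securing condition~2(b): that $\cM(x,y_{max}) = (F,\emptyset,h)$ for all columns $0 \leq x \leq y_{inf}$, i.e.\ no $\dr$-request issued at some $(x,x)$ with $x \leq y_{inf}$ is still ``pending'' at row $y_{max}$. This need not hold for the naive truncation. The fix is to choose $y_{max}$ even more carefully, or to apply a surgery in the spirit of Lemma~\ref{lem:removal}: between $y_{inf}$ and the candidate top row there is a lot of slack (at least $\kphi$ rows, but really we can demand many more recurrence points), and within that slack every $\dr$-request of every low column $(x,x)$ with $x \leq y_{inf}$ must eventually be met — because $\cG$ is fulfilling and those requests are witnessed somewhere above. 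So I would push $y_{max}$ up past the maximum witness-row needed for the finitely many pending requests on the finitely many columns $\leq y_{inf}$, thereby emptying all those $\Psi$-components, while re-selecting $y_{max}$ to still be $\equiv$-equivalent to $y_{inf}$ and to still satisfy the ``all atoms on higher rows appear on $y_{inf}$'' property — both of which survive because we are only moving to a later recurrence point within the (infinitely often recurring, persistent-atom) regime. Once all three generator conditions hold, $\cG'$ is the desired $\bbN$-compass generator, completing the proof. $\qed$
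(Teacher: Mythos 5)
Your proposal matches the paper's proof in both directions: the paper likewise obtains the generator by pigeonholing on $\equiv$-classes, taking $y_{inf}$ to be the first recurrence point and pushing $y_{max}$ up to the first later recurrence point beyond all witnesses of the pending $\dr$-requests of columns $x \leq y_{inf}$ (exactly your fix for condition 2(b)) before truncating; and it realizes the converse by repeatedly appending the block between $y_{inf}$ and $y_{max}$, using condition 2(a) to match each middle column $x$ of the top row to a column $x'$ with $\cM(x',y_{inf})=\cM(x,y_{max})$ whose continuation is copied. The only superfluous step is your ``persistent marked atoms'' detour: condition 2(a) already follows from $\cC_{y_{inf}}\equiv\cC_{y_{max}}$, since equivalent configurations have positive counts on exactly the same marked atoms.
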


\begin{proof}
To prove the left-to-right direction, suppose $\varphi$ to be satisfiable over 
$\bbN$, and let $\cG=(\bbP_{\bbN}, \cL)$ be a fulfilling compass $\varphi$-structure  
which features $\varphi$. 
Since the index of $\equiv$ is finite, there must exist an infinite sequence $\cS=y_1<y_2<\ldots$ in $\bbN$ 
such that $\cC_{y_i}\equiv\cC_{y_j}$ for every $i,j\in\bbN$. Consider now the 
first element $y_1$ in $\cS$, and let $(x,y_1)\in\bbP_{\bbN}$ be 
a point on the row $y_1$. Suppose $\cM(x, y_1) = (F,\Psi,\kphi)$. Since $\cG$ 
is fulfilling, for every $\psi \in \Psi$, there exists $y_\psi > y_1$ 
such that $\psi \in \cL(x,y_\psi)$. Let $\oy$ be the maximum of such $y_\psi$ 
with respect to every $x \leq y_1$ and every $\psi\in \Psi$, and 
let $y_j$ be the smallest element in $\cS$ such that $\oy < y_j$ and $y_j - y_1 \geq 
\kphi$. 
By the definition of the marking function $\cM$, we have that $\cM(x,y_j)=(F,\emptyset, 
h)$, for every $0\leq x \leq y_1$. Consider now the restriction $\cG'$ of $\cG$ to 
$D = \{0,1,\ldots,y_j\}$. It is straightforward to check that, given $y_{max} = y_j$,
$y_1$ satisfies the conditions for $y_{inf}$ of Definition~\ref{def:compassgennat}, 
and thus $\cG'$ is an $\bbN$-compass generator featuring $\varphi$ ($(0,0)$ belongs 
to $\cG'$).

To prove the right-to-left direction, suppose that $\cG=(\bbP_{\bbD}, \cL)$ is 
an $\bbN$-compass generator for $\varphi$. We build a fulfilling compass $\varphi$-structure 
$\cG_\omega=(\bbP_{\bbN}, \cL_\omega)$ as the (infinite) union of an appropriate 
sequence of $\bbN$-compass generators $\cG_0 \subset \cG_1 \subset \ldots$. 
First, we take $\cG_0 = \cG$. Then, for every $i \geq 0$, we build $\cG_{i+1}=
(\bbP_{\bbD_{i+1}}, \cL_{i+1})$ starting from $\cG_i=(\bbP_{\bbD_i}, \cL_i)$ as 
follows. Let $y_{inf} \in D_{i}$ satisfy the conditions of 
Definition~\ref{def:compassgennat}. We put $D_{i+1} = \{0, 1, \ldots, y_{max}, 
\dots, y_{max} + (y_{max} - y_{inf})\}$ and we define $\cL_{i+1}$ as follows:
\begin{compactenum}
	\item for every $(x,y) \in \bbP_{\bbD_i}$, we put $\cL_{i+1}(x,y) = \cL_i(x,y)$;
	
	\item for every $(x,y) \in \bbP_{\bbD_{i+1}}$ such that $x > y_{max} - \kphi$
	and $y > y_{max}$, we put $\cL_{i+1}(x,y) = \cL_i(x-(y_{max} - y_{inf}),
	y-(y_{max} - y_{inf}))$;
	
	\item for every $(x,y) \in \bbP_{\bbD_{i+1}}$ such that $y_{inf} - \kphi \geq x \geq 0$
	and $y > y_{max}$, we put $\cL_{i+1}(x,y) = \cL_i(x,y-(y_{max} - y_{inf}))$;

	\item for every $(x,y) \in \bbP_{\bbD_{i+1}}$ such that $y_{max} - \kphi \geq x > 
	y_{inf} - \kphi$ and $y > y_{max}$, we put $\cL_{i+1}(x,y) = \cL_i(x',y-(y_{max} - y_{inf}))$,
	for some $x'$ such that $\cM(x',y_{inf}) = \cM(x,y_{max})$ (the existence of
	such an $x'$ is guaranteed by property (a) of Definition \ref{def:compassgennat}).
\end{compactenum}
By construction, for every $(F,\Psi, h) \in \cA^{\cM}_\varphi$,
if $\cC_{y_{max}+(y_{max} - y_{inf})}(F,\Psi, h) >0$, then $\cC_{y_{max}}(F,\Psi, h) >0$, 
Moreover, $\cM(x,y_{max}+(y_{max} - y_{inf}))=(A,\emptyset, h)$, for every $0\leq x \leq y_{max}$, 
and thus $\cG_{i+1}$ is a $\bbN$-compass generator for $\varphi$. 

The fulfilling compass $\varphi$-structure satisfying $\varphi$ on $\bbN$ we were looking for
is $\cG_\omega= \bigcup_{i\geq0} \cG_i$.
\end{proof}

\begin{theorem}\label{teo:boundcompassgen}
Let $\varphi$ be an MPNL formula. If there exists an $\bbN$-compass generator 
$\cG=(\bbP_\bbD,\cL)$ that features $\varphi$, then there exists an $\bbN$-compass 
generator $\cG'=(\bbP_{\bbD'},\cL')$, that features $\varphi$, with $|D'|\leq 
\left(2^{3|\varphi|}+2\right) \cdot \left( \frac{|\varphi|^2}{4} + \frac{|\varphi|}{2} 
+1 \right)^{2^{3|\varphi|}} + 1$.
%
%
\end{theorem}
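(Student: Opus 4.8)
Write $N$ for the index of $\equiv$, which by Section~\ref{sec:types} does not exceed $\left( \frac{|\varphi|^2}{4} + \frac{|\varphi|}{2} +1 \right)^{2^{3|\varphi|}}$, so that the target inequality reads $|D'|\le (2^{3|\varphi|}+2)\cdot N+1$. The plan is to contract the given $\bbN$-compass generator $\cG$ step by step, in the spirit of Theorem~\ref{teo:finiterepresentation}, until its domain is short, while preserving at every step the two defining properties of Definition~\ref{def:compassgennat}. The main subtlety is that Lemma~\ref{lem:removal} cannot be used as a black box: an $\bbN$-compass generator is \emph{not} a fulfilling compass structure, since the $\dr$-requests of the columns $x>y_{inf}$ are deliberately left pending (they are discharged only in later copies of the loop). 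So I would first establish a contraction lemma tailored to generators, and then bound separately the \emph{prefix} $\{0,\ldots,y_{inf}\}$ and the \emph{loop} $\{y_{inf},\ldots,y_{max}\}$, using $|D'|=y_{max}'+1=(y_{inf}'+1)+(y_{max}'-y_{inf}')$.

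\emph{A generator contraction lemma.} The claim is: if $\oy<\oy'\le y_{inf}$ and $\cC_{\oy}\equiv\cC_{\oy'}$, then one can build an $\bbN$-compass generator with $|D|$ decreased by $\oy'-\oy$ and distinguished row $y_{inf}-(\oy'-\oy)$. The construction reuses the surgery of Lemma~\ref{lem:removal} verbatim for the ``fulfilled'' part of the structure --- the band within distance $\kphi$ of the diagonal above row $\oy$ is copied with a shift, and the far part is rebuilt from witnesses, essential and blocked elements exactly as in steps~3--5 there --- while the columns $x>y_{inf}$ are simply translated along, so that their pending $\dr$-requests are carried over untouched. One then checks that all $\dl$-requests stay fulfilled (as in Lemma~\ref{lem:removal}), that $y_{max}'-y_{inf}'=y_{max}-y_{inf}\ge\kphi$, that property~(a) of Definition~\ref{def:compassgennat} survives because $\cC_{\oy}\equiv\cC_{\oy'}$ forces the near-diagonal marked atoms of rows $\oy$ and $\oy'$ to coincide, and that property~(b) survives because the far columns above row $\oy$ get re-synchronised with columns realised at row $\oy'\le y_{inf}$, whose $\dr$-requests were already discharged below row $y_{max}$ by property~(b) of $\cG$. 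Iterating this lemma as long as two rows $\le y_{inf}$ carry equivalent configurations leaves the prefix with at most $N$ rows, i.e.\ $y_{inf}\le N-1$.

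\emph{Bounding the loop.} Inside the loop one cannot excise an arbitrary $\equiv$-equivalent slice, since a deleted row could be the only one discharging a pending $\dr$-request of a column $x\le y_{inf}$, which would break property~(b). Let $y^{\ast}$ be the least row at which every column $x\le y_{inf}$ already has empty $\Psi$-component (so $y^{\ast}\le y_{max}$). Above $\max(y^{\ast},y_{inf}+\kphi)$ the contraction lemma applies without any risk of orphaning a request, so that upper part of the loop shrinks to at most $N$ rows. For the middle part, between $y_{inf}$ and $\max(y^{\ast},y_{inf}+\kphi)$, I would --- as in the blocked/essential-element construction of Lemma~\ref{lem:removal} --- choose one witness column per distinct $\Psi$-component occurring among the columns $\le y_{inf}$, retain the at most $f$ rows at which that witness discharges its requests, synchronise every column sharing that $\Psi$-component with its witness, and also keep the $\kphi$ rows just above $y_{inf}$; the contraction lemma then applies inside every gap free of the retained rows and shrinks it to at most $N$ rows. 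Bounding the number of retained rows (hence of gaps) crudely by $2^{3|\varphi|}$ makes this middle part at most $2^{3|\varphi|}\cdot N$ rows, and altogether, accounting for the shared boundary rows, $|D'|\le (2^{3|\varphi|}+2)\cdot N+1$.

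The hard part is, I expect, the generator contraction lemma itself --- in particular verifying that after the surgery no pending $\dr$-request of a column $x\le y_{inf}$ is ever orphaned and that properties~(a) and~(b) of Definition~\ref{def:compassgennat} are genuinely re-established for the shifted distinguished row --- together with pinning down the exact count of the un-contractible core of the loop so that the constants match the stated bound. Everything else is the same $\equiv$-based counting already used in the proof of Theorem~\ref{teo:finiterepresentation}.
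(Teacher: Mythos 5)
Your overall strategy---iterate a generator-adapted version of the surgery of Lemma~\ref{lem:removal} and bound the result by the number of protected positions times the index of $\equiv$---is the right one, and your loop treatment (retaining the rows where the pending $\dr$-requests of the columns $x\le y_{inf}$ are discharged) is a sensible precaution for condition~2(b). But there is a genuine gap in your prefix step, and it concerns exactly the condition on which the paper spends its $2^{3|\varphi|}$ budget. When you contract freely between two equivalent rows $\oy<\oy'\le y_{inf}$, the surgery of Lemma~\ref{lem:removal} collapses, within each seam class $(A,\Psi,\kphi)$, all far columns other than the images of $g$, the blocked elements and the single witness onto that witness; hence any marked atom that occurs on row $y_{inf}$ only on collapsed columns disappears from the new $y_{inf}$-row, while it may survive on the new $y_{max}$-row (whose atoms are computed through the same column reassignment but need not be realized by the same columns as on row $y_{inf}$). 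Your justification---that $\cC_{\oy}\equiv\cC_{\oy'}$ forces the near-diagonal atoms of the two rows to coincide---only secures consistency at the seam and says nothing about which atoms survive at height $y_{inf}$. Nor can this be patched inside the surgery for free: $\equiv$ only guarantees $p\cdot f+p$ interchangeable columns per class, far fewer than the up to $2^{3|\varphi|}$ marked atoms of row $y_{inf}$ that must remain realized, so condition~2(a) can genuinely break under your free prefix contraction.

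The paper allocates the $2^{3|\varphi|}$ factor precisely there and treats the loop $[y_{inf},y_{max})$ as a single segment: it fixes a minimal set $S=\{\oy_0,\ldots,\oy_m\}$ containing $0$, $y_{inf}$, $y_{max}$ and one element per marked atom occurring on row $y_{inf}$ (so $m\le 2^{3|\varphi|}+3$), contracts only between equivalent rows lying strictly inside one segment of $S$, recomputes $S$ after each step, and, for contractions below $y_{inf}$, chooses the witnesses at step~4 of Lemma~\ref{lem:removal} so that all triples of row $y_{inf}$ are preserved; for contractions with $y_{inf}\le y<y'<y_{max}$ it observes that triples can only disappear from row $y_{max}$, which is harmless for condition~2(a). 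The bound is then the number of segments times the index of $\equiv$, plus one. So your decomposition places the exponentially many protected positions above $y_{inf}$ (guarding 2(b)) where the paper places them below $y_{inf}$ (guarding 2(a)); to repair your argument you would have to import the paper's representative-column and witness bookkeeping into your ``generator contraction lemma'' for the prefix, at which point you are essentially reproducing the paper's proof.
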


\begin{proof}
Let  $\cG=(\bbP_\bbD,\cL)$ be an $\bbN$-compass generator which features $\varphi$,
and let $y_{inf}\in D$ satisfy the conditions of Definition \ref{def:compassgennat}. 
We define a minimal set $S=\{\oy_0,\ldots,\oy_m\}$ of elements in $D$ 
such that  (i) $\oy_0 = 0$, (ii) $\oy_j<\oy_{j+1}$, for each $0\leq j < m$, 
(iii) $\oy_{m-1} = y_{inf}$, (iv) $\oy_m= y_{max}$, and (v) for every 
$(F,\Psi, h) \in \cA^{\cM}_\varphi$, if $\cC_{y_{inf}}(F,\Psi, h) >0$, then 
there exists $\oy_j$ such that $\cM(\oy_j,y_{inf}) = (F,\Psi, h)$. From the
minimality requirement, it follows that $m\leq  2^{3|\varphi|}+3$.

We build a finite sequence of $\bbN$-compass generators $\cG_0 \supset \cG_1 \supset 
\ldots \supset \cG_n$, whose last element is a small enough $\bbN$-compass generator
$\cG_n$, as follows.
We start with $\cG_0 = \cG$. Now, let $\cG_i=(\bbP_{\bbD_i}, \cL_i)$ be the $i$-th 
compass generator in the sequence, and let $S_i=\{\oy_0,\ldots,\oy_m\}$ be 
the above-defined minimal set of elements in $D_i$. 
If there exist no $y, y'$, with $\oy_j \leq y < y'<\oy_{j+1}$ for some $0 \leq j < m$, 
such that $\cC_{y}\equiv\cC_{y'}$, we terminate the construction and put $n = i$, that 
is, $\cG_i$ is the last $\bbN$-compass generator in the sequence. 
Otherwise, we must distinguish two cases. If $y_{inf} \leq y, y' < y_{max}$, then
the application of (the construction of) Lemma~\ref{lem:removal} to the pair of 
positions $y$ and $y'$ produces an $\bbN$-compass generator $\cG_{i+1} = 
(\bbP_{\bbD_{i+1}}, \cL_{i+1})$, with $\lenf{D_{i+1}} = \lenf{D_i} - (y'-y)$. 
It can be easily checked that the resulting structure satisfies the conditions 
of Definition \ref{def:compassgennat} (notice that some triples may disappear
from $y_{max}$, that is, $\cC_{y_{max}}(F,\Psi, h)$ may become equal to $0$
for some triple $(F,\Psi, h)$).
If $\oy_j \leq y, y' < \oy_{j+1}$ for some $j \leq \oy_{m-2}$, we can still
apply (the construction of) Lemma~\ref{lem:removal} to the pair of positions 
$y$ and $y'$, but we must guarantee that all triples belonging to the row $y_{inf}$
in $D_{i}$ are preserved. This can be done by an appropriate choice of the 
witnesses at step 4 of (the construction of) Lemma~\ref{lem:removal}. 
It is worth noticing that in both cases, while positions between $\oy_{j+1}$
and $\oy_{m-2}$ (if any) remain unchanged (they are only shifted), those between 
$\oy_{1}$ and $\oy_{j}$ may change from $S_i$ to $S_{i+1}$.

At the end of the procedure, all the horizontal configurations in between two consecutive
elements $\oy_j,\oy_{j+1} \in S$ are pairwise non-equivalent. From this, it immediately
follows that the final $\bbN$-compass generator $\cG_n=(\bbP_{\bbD_n},\cL_n)$ is such that 
$|D_n|\leq \left(2^{3|\varphi|}+2\right) \cdot \left( \frac{|\varphi|^2}{4} + \frac{|\varphi|}{2} 
+1 \right)^{2^{3|\varphi|}} + 1$.
\end{proof}
\newcommand{\cut}[1]{}
\section{Decidability of MPNL over the integers}

\begin{figure}
\begin{tikzpicture}[scale=0.65]


\fill[color=blue!20] (1.5,1.5) -- (5,5) -- (0,5) --(0,1.5);
\fill[color=red!20] (0,0) -- (1.5,1.5) -- (0,1.5);
\draw[very thick] (0,0) -- (5,5) node[pos=0.1, above=0.1cm](T0) {$\cT_0$};
\draw[very thick] (0,5) -- (5,5);
\draw[very thick] (0,1.5) -- (0,0);
\draw[very thick, dashed] (0,1.5) -- (0,2);
\draw[very thick, dashed] (0,2.3) -- (0,5);
\draw[very thick] (0,1.5) -- (1.5,1.5);
\draw[very thick](3.5,3.5) -- (0,3.5);
\draw[very thick](3.5,3.5) -- (3.5,5);
\draw[very thick](1.5,1.5) -- (1.5,5);
\node[shape=circle,draw=black,inner sep=1.5pt,fill=black, 
](D) at (1.5,1.5) {};
\node[fill=blue!20,inner sep=1pt](DLAB)[above of=D, node distance=0.45cm] {$D$};
\node[shape=circle,draw=black,inner sep=1.5pt,fill=blue, label={[label distance=0.1cm]above:$C$}](C) at (1,1.5) {};
\node[shape=circle,draw=black,inner sep=1.5pt,fill=green, label={[label distance=0.1cm]above:$B$}](B) at (0.5,1.5) {};


\draw[very thick]  (-2.5,0) -- (-2.5,-0.3);


\draw[very thick] (-2.5,-2.5) -- (0,0);
\draw[very thick] (-2.5,0) -- (0,0);
\draw[very thick] (0,1.5) -- (-2.5,1.5);
\draw[very thick] (0,5) -- (-2.5,5);
\draw[very thick] (-2.5,3.5) -- (0,3.5);
\node[shape=circle,draw=black,inner sep=1.5pt,fill=black, label={[label distance=0.1cm]below:$D$}](Dp) at (0,0) {};
\node[shape=circle,draw=black,inner sep=1.5pt,fill=blue](Cp) at (-0.5,0) {};
\node[fill=white,inner sep=1pt](CpLAB)[below of=Cp, node distance=0.45cm]{$C$};
\node[shape=circle,draw=black,inner sep=1.5pt,fill=green, label={[label distance=0.1cm]below:$B$}](Bp) at (-1,0) {};
\node[shape=circle,draw=black,inner sep=1.5pt,fill=yellow, label={[label distance=0.1cm]below:$A$}](Ap1) at (-1.5,0) {};
\node[shape=circle,draw=black,inner sep=1.5pt,fill=yellow, label={[label distance=0.1cm]below:$A$}](Ap2) at (-2,0) {};
\draw[very thick]  (-2.5,0) -- (-2.5,-0.3);
\node[shape=circle,draw=black,inner sep=1.5pt,fill=yellow, label={[label distance=0.1cm]below:$A$}](Ap3) at (-2.5,0) {};
\draw[very thick]  (-2.5,-2.5) -- (-2.5,-0.9);

\node[shape=circle,draw=black,inner sep=1.5pt,fill=gray!50](aCp) at (-0.5,1.5) {};
\node[shape=circle,draw=black,inner sep=1.5pt,fill=gray!50](aBp) at (-1,1.5) {};
\node[shape=circle,draw=black,inner sep=1.5pt,fill=gray!50](aAp1) at (-1.5,1.5) {};
\node[shape=circle,draw=black,inner sep=1.5pt,fill=gray!50](aAp2) at (-2,1.5) {};
\node[shape=circle,draw=black,inner sep=1.5pt,fill=gray!50](aAp3) at (-2.5,1.5) {};
\draw[very thick](Ap1) -- (aAp1) node[fill=white,inner sep=1pt, pos=0.5](1){$3$};
\draw[very thick](Ap2) -- (aAp2) node[fill=white,inner sep=1pt, pos=0.5](1){$2$};
\draw[very thick](Bp) -- (aBp) node[fill=white,inner sep=1pt, pos=0.5](1){$4$};
\draw[very thick](Cp) -- (aCp) node[fill=white,inner sep=1pt, pos=0.5](1){$5$};
\draw[very thick](Ap3) -- (aAp3) node[fill=white,inner sep=1pt, pos=0.5](1){$1$};

\node[shape=circle,draw=black,inner sep=1.5pt,fill=red](aaCp) at (-0.5,5) {};
\node[shape=circle,draw=black,inner sep=1.5pt,fill=red](aaBp) at (-1,5) {};
\node[shape=circle,draw=black,inner sep=1.5pt,fill=red](aaAp1) at (-1.5,5) {};
\node[shape=circle,draw=black,inner sep=1.5pt,fill=red](aaAp2) at (-2,5) {};
\node[shape=circle,draw=black,inner sep=1.5pt,fill=red](aaAp3) at (-2.5,5) {};
\draw[very thick] (aAp3) -- (aaAp3) node[fill=white,inner sep=1pt, pos=0.4](1){$a$};
\draw[very thick] (aAp2) -- (aaAp2) node[fill=white,inner sep=1pt, pos=0.4](1){$b$};
\draw[very thick] (aAp1) -- (aaAp1) node[fill=white,inner sep=1pt, pos=0.4](1){$c$};
\draw[very thick] (aBp) -- (aaBp) node[fill=white,inner sep=1pt, pos=0.4](1){$d$};
\draw[very thick] (aCp) -- (aaCp) node[fill=white,inner sep=1pt, pos=0.4](1){$e$};

\node[shape=circle,draw=black,inner sep=1.5pt,fill=yellow, label={[label distance=0.1cm]above:$A$}](B) at (0,1.5) {};

\node(YMAX) at (-3.5,5) {$y_{max}$};
\node(YINF) at (-3.5,3.5) {$y_{fut}$};
\node(YFIN) at (-3.5,1.5) {$0$};
\node(YPAST) at (-3.5,0) {$y_{past}$};
\node(YMIN) at (-3.5,-2.5) {$y_{min}$};

\node(LABEL) at (-1,-2.5) {(a)};


\pgftransformshift{\pgfpoint{6cm}{0cm}}

\fill[color=blue!20] (1.5,1.5) -- (5,5) -- (0,5) --(0,1.5);
\fill[color=red!20] (0,0) -- (1.5,1.5) -- (0,1.5);
\draw[very thick] (0,0) -- (5,5) node[pos=0.1, above=0.1cm](T0) {$\cT_0$};
\draw[very thick] (0,5) -- (5,5);
\draw[very thick] (0,1.5) -- (0,0);
\draw[very thick, dashed] (0,1.5) -- (0,2);
\draw[very thick, dashed] (0,2.3) -- (0,5);
\draw[very thick] (0,1.5) -- (1.5,1.5);
\draw[very thick](3.5,3.5) -- (0,3.5);
\draw[very thick](3.5,3.5) -- (3.5,5);
\draw[very thick](1.5,1.5) -- (1.5,5);
\node[shape=circle,draw=black,inner sep=1.5pt,fill=black](D) at (1.5,1.5) {};
\node[fill=blue!20,inner sep=1pt](DLAB)[above of=D, node distance=0.45cm] {$D$};
\node[shape=circle,draw=black,inner sep=1.5pt,fill=blue, label={[label distance=0.1cm]above:$C$}](C) at (1,1.5) {};
\node[shape=circle,draw=black,inner sep=1.5pt,fill=green, label={[label distance=0.1cm]above:$B$}](B) at (0.5,1.5) {};
\node[shape=circle,draw=black,inner sep=1.5pt,fill=yellow, label={[label distance=0.1cm]above:$A$}](A) at (0,1.5) {};

\pgftransformshift{\pgfpoint{-1.5cm}{-1.5cm}}

\fill[color=red!20] (0,0) -- (1.5,1.5) -- (0,1.5);
\draw[very thick] (0,0) -- (1.5,1.5) node[pos=0.3, above=0.1cm](T0) {$\cT_1$};
\draw[very thick] (0,1.5) -- (0,0);
\draw[very thick] (0,1.5) -- (1.5,1.5);
\node[shape=circle,draw=black,inner sep=1.5pt,fill=black](aD) at (1.5,1.5) {};
\node[shape=circle,draw=black,inner sep=1.5pt,fill=blue](aC) at (1,1.5) {};
\node[shape=circle,draw=black,inner sep=1.5pt,fill=green](aB) at (0.5,1.5) {};
\node[shape=circle,draw=black,inner sep=1.5pt,fill=yellow](aA) at (0,1.5) {};

\pgftransformshift{\pgfpoint{-1.5cm}{-1.5cm}}
\cut{
\fill[color=red!20] (0,0) -- (1.5,1.5) -- (0,1.5);
\draw[very thick] (0,0) -- (1.5,1.5) node[pos=0.3, above=0.1cm](T0) {$\cT_2$};
\draw[very thick] (0,1.5) -- (0,0);
\draw[very thick] (0,1.5) -- (1.5,1.5);}
\node[shape=circle,draw=black,inner sep=1.5pt,fill=black](bD) at (1.5,1.5) {};

\cut{\node[shape=circle,draw=black,inner sep=1.5pt,fill=blue](bC) at (1,1.5) {};
\node[shape=circle,draw=black,inner sep=1.5pt,fill=green](bB) at (0.5,1.5) {};
\node[shape=circle,draw=black,inner sep=1.5pt,fill=yellow](bA) at (0,1.5) {};
}
\pgftransformshift{\pgfpoint{-1.5cm}{-1.5cm}}
\cut{
\fill[color=red!20] (0,0) -- (1.5,1.5) -- (0,1.5);
\draw[very thick] (0,0) -- (1.5,1.5) node[pos=0.3, above=0.1cm](T0) {$\cT_3$};
\draw[very thick] (0,1.5) -- (0,0);
\draw[very thick] (0,1.5) -- (1.5,1.5);
\node[shape=circle,draw=black,inner sep=1.5pt,fill=black](cD) at (1.5,1.5) {};
\node[shape=circle,draw=black,inner sep=1.5pt,fill=blue](cC) at (1,1.5) {};
\node[shape=circle,draw=black,inner sep=1.5pt,fill=green](cB) at (0.5,1.5) {};
\node[shape=circle,draw=black,inner sep=1.5pt,fill=yellow](cA) at (0,1.5) {};

\node[shape=circle,draw=black,inner sep=1.5pt,fill=gray!50](cA1) at (0,3){};
\node[shape=circle,draw=black,inner sep=1.5pt,fill=gray!50](cC1) at (1,3) {};
\node[shape=circle,draw=black,inner sep=1.5pt,fill=gray!50](cB1) at (0.5,3) {};

\draw[very thick](cA) -- (cA1) node[fill=white,inner sep=1pt, pos=0.5]{$1$};
\draw[very thick](cB) -- (cB1) node[fill=white,inner sep=1pt, pos=0.5]{$4$};
\draw[very thick](cC) -- (cC1) node[fill=white,inner sep=1pt, pos=0.5]{$5$};

\node[shape=circle,draw=black,inner sep=1.5pt,fill=gray!50](cA2) at (0,4.5){};
\node[shape=circle,draw=black,inner sep=1.5pt,fill=gray!50](cC2) at (1,4.5) {};
\node[shape=circle,draw=black,inner sep=1.5pt,fill=gray!50](cB2) at (0.5,4.5) {};
\node[shape=circle,draw=black,inner sep=1.5pt,fill=gray!50](bA1) at (1.5,4.5){};
\node[shape=circle,draw=black,inner sep=1.5pt,fill=gray!50](bC1) at (2.5,4.5) {};
\node[shape=circle,draw=black,inner sep=1.5pt,fill=gray!50](bB1) at (2,4.5) {};
}
\cut{
\draw[very thick](bA) -- (bA1) node[fill=white,inner sep=1pt, pos=0.5]{$2$};
\draw[very thick](bB) -- (bB1) node[fill=white,inner sep=1pt, pos=0.5]{$4$};
\draw[very thick](bC) -- (bC1) node[fill=white,inner sep=1pt, pos=0.5]{$5$};
\draw[very thick](cA1) -- (cA2) node[fill=white,inner sep=1pt, pos=0.5]{$1$};
\draw[very thick](cB1) -- (cB2) node[fill=white,inner sep=1pt, pos=0.5]{$4$};
\draw[very thick](cC1) -- (cC2) node[fill=white,inner sep=1pt, pos=0.5]{$4$};

\node[shape=circle,draw=black,inner sep=1.5pt,fill=gray!50](cA3) at (0,6){};
\node[shape=circle,draw=black,inner sep=1.5pt,fill=gray!50](cC3) at (1,6) {};
\node[shape=circle,draw=black,inner sep=1.5pt,fill=gray!50](cB3) at (0.5,6) {};
\node[shape=circle,draw=black,inner sep=1.5pt,fill=gray!50](bA2) at (1.5,6){};
\node[shape=circle,draw=black,inner sep=1.5pt,fill=gray!50](bC2) at (2.5,6) {};
\node[shape=circle,draw=black,inner sep=1.5pt,fill=gray!50](bB2) at (2,6) {};
}
\node[shape=circle,draw=black,inner sep=1.5pt,fill=gray!50](aC1) at (4,6) {};
\node[shape=circle,draw=black,inner sep=1.5pt,fill=gray!50](aB1) at (3.5,6) {};
\node[shape=circle,draw=black,inner sep=1.5pt,fill=gray!50](aA1) at (3,6) {};

\cut{
\draw[very thick](bA2) -- (bA1) node[fill=white,inner sep=1pt, pos=0.5]{$2$};
\draw[very thick](bB2) -- (bB1) node[fill=white,inner sep=1pt, pos=0.5]{$4$};
\draw[very thick](bC2) -- (bC1) node[fill=white,inner sep=1pt, pos=0.5]{$5$};
\draw[very thick](cA3) -- (cA2) node[fill=white,inner sep=1pt, pos=0.5]{$1$};
\draw[very thick](cB3) -- (cB2) node[fill=white,inner sep=1pt, pos=0.5]{$4$};
\draw[very thick](cC3) -- (cC2) node[fill=white,inner sep=1pt, pos=0.5]{$4$};
}

\draw[very thick](aA) -- (aA1) node[fill=white,inner sep=1pt, pos=0.5]{$2$};
\draw[very thick](aB) -- (aB1) node[fill=white,inner sep=1pt, pos=0.5]{$4$};
\draw[very thick](aC) -- (aC1) node[fill=white,inner sep=1pt, pos=0.5]{$5$};

\node(LABEL) at (3,2) {(b)};


\pgftransformshift{\pgfpoint{13.5cm}{4.5cm}}

\fill[color=blue!20] (1.5,1.5) -- (5,5) -- (0,5) --(0,1.5);
\fill[color=red!20] (0,0) -- (1.5,1.5) -- (0,1.5);
\draw[very thick] (0,0) -- (5,5) node[pos=0.1, above=0.1cm](T0) {$\cT_0$};
\draw[very thick] (0,5) -- (5,5);
\draw[very thick] (0,1.5) -- (0,0);
\draw[very thick, dashed] (0,1.5) -- (0,2);
\draw[very thick, dashed] (0,2.3) -- (0,5);
\draw[very thick] (0,1.5) -- (1.5,1.5);
\draw[very thick](3.5,3.5) -- (0,3.5);
\draw[very thick](3.5,3.5) -- (3.5,5);
\draw[very thick](1.5,1.5) -- (1.5,5);
\node[shape=circle,draw=black,inner sep=1.5pt,fill=black](D) at (1.5,1.5) {};
\node[fill=blue!20,inner sep=1pt](DLAB)[above of=D, node distance=0.45cm] {$D$};
\node[shape=circle,draw=black,inner sep=1.5pt,fill=blue, label={[label distance=0.1cm]above:$C$}](C) at (1,1.5) {};
\node[shape=circle,draw=black,inner sep=1.5pt,fill=green, label={[label distance=0.1cm]above:$B$}](B) at (0.5,1.5) {};
\node[shape=circle,draw=black,inner sep=1.5pt,fill=yellow, label={[label distance=0.1cm]above:$A$}](A) at (0,1.5) {};

\pgftransformshift{\pgfpoint{-1.5cm}{-1.5cm}}

\fill[color=red!20] (0,0) -- (1.5,1.5) -- (0,1.5);
\draw[very thick] (0,0) -- (1.5,1.5) node[pos=0.3, above=0.1cm](T0) {$\cT_1$};
\draw[very thick] (0,1.5) -- (0,0);
\draw[very thick] (0,1.5) -- (1.5,1.5);
\node[shape=circle,draw=black,inner sep=1.5pt,fill=black](aD) at (1.5,1.5) {};
\node[shape=circle,draw=black,inner sep=1.5pt,fill=blue](aC) at (1,1.5) {};
\node[shape=circle,draw=black,inner sep=1.5pt,fill=green](aB) at (0.5,1.5) {};
\node[shape=circle,draw=black,inner sep=1.5pt,fill=yellow](aA) at (0,1.5) {};

\pgftransformshift{\pgfpoint{-1.5cm}{-1.5cm}}

\fill[color=red!20] (0,0) -- (1.5,1.5) -- (0,1.5);
\draw[very thick] (0,0) -- (1.5,1.5) node[pos=0.3, above=0.1cm](T0) {$\cT_2$};
\draw[very thick] (0,1.5) -- (0,0);
\draw[very thick] (0,1.5) -- (1.5,1.5);
\node[shape=circle,draw=black,inner sep=1.5pt,fill=black](bD) at (1.5,1.5) {};
\node[shape=circle,draw=black,inner sep=1.5pt,fill=blue](bC) at (1,1.5) {};
\node[shape=circle,draw=black,inner sep=1.5pt,fill=green](bB) at (0.5,1.5) {};
\node[shape=circle,draw=black,inner sep=1.5pt,fill=yellow](bA) at (0,1.5) {};

\pgftransformshift{\pgfpoint{-1.5cm}{-1.5cm}}

\node[shape=circle,draw=black,inner sep=1.5pt,fill=black](cD) at (1.5,1.5) {};



\node[shape=circle,draw=black,inner sep=1.5pt,fill=gray!50](bA1) at (1.5,4.5){};
\node[shape=circle,draw=black,inner sep=1.5pt,fill=gray!50](bC1) at (2.5,4.5) {};
\node[shape=circle,draw=black,inner sep=1.5pt,fill=gray!50](bB1) at (2,4.5) {};

\draw[very thick](bA) -- (bA1) node[fill=white,inner sep=1pt, pos=0.5]{$2$};
\draw[very thick](bB) -- (bB1) node[fill=white,inner sep=1pt, pos=0.5]{$4$};
\draw[very thick](bC) -- (bC1) node[fill=white,inner sep=1pt, pos=0.5]{$5$};

\node[shape=circle,draw=black,inner sep=1.5pt,fill=gray!50](bA2) at (1.5,6){};
\node[shape=circle,draw=black,inner sep=1.5pt,fill=gray!50](bC2) at (2.5,6) {};
\node[shape=circle,draw=black,inner sep=1.5pt,fill=gray!50](bB2) at (2,6) {};
\node[shape=circle,draw=black,inner sep=1.5pt,fill=gray!50](aC1) at (4,6) {};
\node[shape=circle,draw=black,inner sep=1.5pt,fill=gray!50](aB1) at (3.5,6) {};
\node[shape=circle,draw=black,inner sep=1.5pt,fill=gray!50](aA1) at (3,6) {};

\draw[very thick](bA2) -- (bA1) node[fill=white,inner sep=1pt, pos=0.5]{$1$};
\draw[very thick](bB2) -- (bB1) node[fill=white,inner sep=1pt, pos=0.5]{$1$};

\draw[very thick](bC2) -- (bC1) node[fill=white,inner sep=1pt, pos=0.5]{$3$};
\draw[very thick](aA) -- (aA1) node[fill=white,inner sep=1pt, pos=0.5]{$2$};
\draw[very thick](aB) -- (aB1) node[fill=white,inner sep=1pt, pos=0.5]{$4$};
\draw[very thick](aC) -- (aC1) node[fill=white,inner sep=1pt, pos=0.5]{$5$};

\cut{
\node[shape=circle,draw=black,inner sep=1.5pt,fill=red](cA4) at (0,9.5){};
\node[shape=circle,draw=black,inner sep=1.5pt,fill=red](cC4) at (1,9.5) {};
\node[shape=circle,draw=black,inner sep=1.5pt,fill=red](cB4) at (0.5,9.5) {};
}

\node[shape=circle,draw=black,inner sep=1.5pt,fill=red](bA3) at (1.5,9.5){};
\node[shape=circle,draw=black,inner sep=1.5pt,fill=red](bC3) at (2.5,9.5) {};
\node[shape=circle,draw=black,inner sep=1.5pt,fill=red](bB3) at (2,9.5) {};
\node[shape=circle,draw=black,inner sep=1.5pt,fill=red](aC2) at (4,9.5) {};
\node[shape=circle,draw=black,inner sep=1.5pt,fill=red](aB2) at (3.5,9.5) {};
\node[shape=circle,draw=black,inner sep=1.5pt,fill=red](aA2) at (3,9.5) {};

\cut{ \draw[very thick](cA3) -- (cA4) node[fill=white,inner sep=1pt, pos=0.4]{$a$}; 
\draw[very thick](cC3) -- (cC4) node[fill=white,inner sep=1pt, pos=0.4]{$d$};
\draw[very thick](cB3) -- (cB4) node[fill=white,inner sep=1pt, pos=0.4]{$e$};}

\draw[very thick](bA2) -- (bA3) node[fill=white,inner sep=1pt, pos=0.4]{$a$};
\draw[very thick](bC2) -- (bC3) node[fill=white,inner sep=1pt, pos=0.4]{$b$};
\draw[very thick](bB2) -- (bB3) node[fill=white,inner sep=1pt, pos=0.4]{$a$};
\draw[very thick](aC1) -- (aC2) node[fill=white,inner sep=1pt, pos=0.4]{$e$};
\draw[very thick](aB1) -- (aB2) node[fill=white,inner sep=1pt, pos=0.4]{$d$};
\draw[very thick](aA1) -- (aA2) node[fill=white,inner sep=1pt, pos=0.4]{$c$};

\node(LABEL) at (3,1.8) {(c)};
\end{tikzpicture}
\label{fig:intproof}
\caption{From a $\bbZ$-compass generator to a compass structure over $\bbZ$.}
\end{figure}
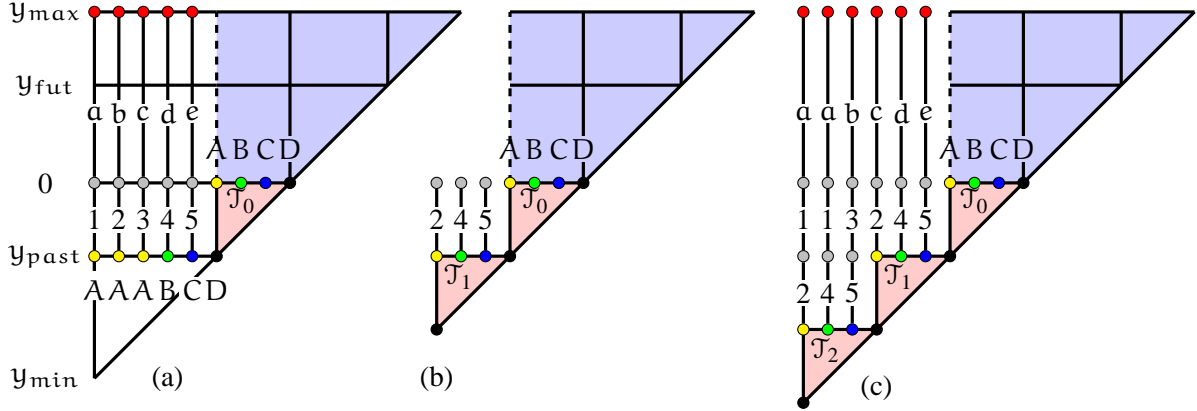

\newcommand{\ox}{\overline{x}}

In this section, we generalize the notion of \emph{compass generator} in order 
to prove the decidability of the satisfiability problem for MPNL over $\bbZ$.

\begin{definition}\label{def:compassgenint}
Let $\varphi$ be an MPNL formula. A \emph{$\bbZ$-compass generator} for $\varphi$ 
is a finite compass $\varphi$-structure $\cG=(\bbP_\bbD,\cL)$ such that there exist
$y_{fut}, y_{past} \in D$, with $y_{past} < 0 < y_{fut}$, $y_{past} - y_{min}
\geq k_\varphi$, and $y_{max} - y_{fut} \geq k_\varphi$, which satisfy the 
following conditions:
\begin{compactenum}
\item all $\dl$-requests of every point $(y,y)\in\bbP_\bbD$, with $y_{past}
\leq y \leq y_{max}$, are fulfilled;
\item for every $(F,\Psi, h) \in \cA^{\cM}_\varphi$, if $\cC_{y_{max}}(F,\Psi, h) >0$,
			then $\cC_{y_{fut}}(F,\Psi, h) >0$, and $\cM(x,y_{max})=(F,\emptyset, h)$, 
			for every $y_{min} \leq x \leq y_{fut}$;
\item for every $(F,\Psi, h) \in \cA^{\cM}_\varphi$, if $\cC_{y_{past}}(F,\Psi, h) >0$,
	then there exists $y_{past} \leq x \leq 0$ such that $\cM(x,0)=(F,\Psi, h)$.			
			
\end{compactenum}
\end{definition}

\begin{theorem}\label{teo:satiffcompassgenint}
An MPNL formula $\varphi$ is satisfiable over $\bbZ$ iff there exists
a $\bbZ$-compass generator for it.
\end{theorem}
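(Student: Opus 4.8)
The strategy mirrors the proof of Theorem \ref{teo:satiffcompassgen} (the $\bbN$-case), but now we must handle both an unbounded future and an unbounded past. For the right-to-left direction, assume we are given a $\bbZ$-compass generator $\cG=(\bbP_\bbD,\cL)$ with the distinguished rows $y_{past} < 0 < y_{fut}$. The plan is to build a fulfilling compass $\varphi$-structure over all of $\bbZ$ as a ``bi-infinite'' union. For the future part, I would proceed exactly as in Theorem \ref{teo:satiffcompassgen}: condition (2) of Definition \ref{def:compassgenint} plays the role of the $y_{inf}$ condition of Definition \ref{def:compassgennat}, so the very same four-case stretching construction (copying $\cL_i$ below $y_{max}$, shifting the ``rightmost'' strip of width $\kphi$, shifting the ``leftmost'' strip, and filling the middle using property (2)) produces an increasing chain $\cG_0 \subset \cG_1 \subset \ldots$ whose limit covers $D \cup \{y : y > y_{max}\}$ and fulfils all $\dr$-requests as well as all $\dl$-requests on those new rows. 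For the past part, I would run the symmetric construction downward: condition (3) guarantees that every marked atom occurring on row $y_{past}$ already occurs among columns $x$ with $y_{past} \le x \le 0$ of row $0$, which is precisely what is needed to glue a downward copy of the initial triangle below $y_{past}$ while preserving temporal consistency across the seam; iterating gives a decreasing chain covering $\{y : y < y_{past}\}$. Taking the union of the upward chain, the downward chain, and $\cG$ itself yields a compass $\varphi$-structure over $\bbZ$; condition (1) ensures all $\dl$-requests inside the original block are fulfilled, the future construction handles the rest, and $(0,0)\in\cG$ still carries $\varphi$.

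For the left-to-right direction, suppose $\varphi$ is satisfiable over $\bbZ$, and let $\cG=(\bbP_{\bbZ},\cL)$ be a fulfilling compass $\varphi$-structure featuring $\varphi$, with $\varphi \in \cL(0,0)$. Since $\equiv$ has finite index (Definition \ref{def:horconfequiv}), there is an infinite increasing sequence of rows all in one $\equiv$-class going to $+\infty$, and likewise an infinite decreasing sequence going to $-\infty$. I would first pick $y_{fut}$ large enough (from the upward sequence) so that, exactly as in Theorem \ref{teo:satiffcompassgen}, every $\dr$-request generated by a diagonal point $(x,x)$ with $x \le y_{fut}$ is already witnessed below some row $y_{max}$ chosen from the upward sequence with $y_{max} - y_{fut} \ge \kphi$; by the definition of the marking function this forces $\cM(x,y_{max}) = (F,\emptyset,h)$ for all relevant $x$, giving conditions (2). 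Symmetrically, I would pick $y_{past} < 0$ from the downward sequence with $y_{past} - y_{min} \ge \kphi$ for a suitable $y_{min}$ chosen below it, and I must arrange condition (3): because $\cC_{y_{past}} \equiv \cC_{0}$ would be too strong to demand directly, instead I choose $y_{past}$ so that its horizontal configuration is $\equiv$-equivalent to that of row $0$ restricted appropriately — more precisely, pick $y_{past}$ from the downward sequence so that $\cC_{y_{past}} \equiv \cC_0$, which guarantees every marked atom on row $y_{past}$ also appears on row $0$, and since all such columns $x$ satisfy $x \le y_{past} \le 0$ after a possible re-indexing, condition (3) holds. Finally, restrict $\cG$ to $D = \{y_{min}, \ldots, y_{max}\}$; condition (1) holds because $\cG$ is fulfilling and all $\dl$-requests of diagonal points in that range point to earlier rows still present in $D$ (here the choice $y_{past} - y_{min} \ge \kphi$ is what makes room for those witnesses, using that length constraints bound how far left a $\dl$-witness of a bounded-length interval can sit).

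The main obstacle is condition (3) and, more generally, making the past seam work. In the $\bbN$-case the domain has a genuine left endpoint $0$, and property (a) of Definition \ref{def:compassgennat} only needs the configuration on one ``inner'' row to dominate that on $y_{max}$; over $\bbZ$ there is no left endpoint, so when we glue a fresh triangle below $y_{past}$ we must ensure both that every $\dl$-request arising on the newly added diagonal rows is met (handled by copying from an isomorphic strip, using condition (3) to locate the right columns) and that $\dr$-requests created by those new diagonal points are not left dangling. The delicate point is that a $\dr$-request generated far down in the past may need a witness that, in the original $\cG$, lived at a row near $0$ whose column index was small (close to $y_{past}$); condition (3) is phrased exactly to make such witnesses available at row $0$, but verifying that the downward-copying construction consistently routes every past $\dr$-request to such a witness — across infinitely many iterations, while maintaining temporal consistency at each seam — is where the real bookkeeping lies. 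I expect this to require an argument paralleling the witness/essential-element/blocked-element machinery of Lemma \ref{lem:removal}, now applied ``in the past direction'', and it is the step I would write out most carefully.
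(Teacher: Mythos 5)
Your overall architecture coincides with the paper's: the future extension is handled exactly as in Theorem \ref{teo:satiffcompassgen}, the past is handled by gluing downward copies of the triangle between rows $y_{past}$ and $0$ using condition (3), and the left-to-right direction extracts the distinguished rows from $\equiv$-equivalent sequences going to $+\infty$ and $-\infty$. Your right-to-left sketch sits at roughly the same level of detail as the paper's own (which is itself only a sketch), and you correctly identify where the real bookkeeping lies.

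There are, however, two concrete gaps in your left-to-right direction. First, for condition (3) you take $y_{past}$ with $\cC_{y_{past}}\equiv\cC_0$ and then assert that the witnessing columns on row $0$ fall into the required range ``after a possible re-indexing''. Equivalence of horizontal configurations only tells you that each marked atom of row $y_{past}$ occurs \emph{somewhere} on row $0$, i.e.\ at some column $x\le 0$; it says nothing about whether $x\ge y_{past}$, and no re-indexing repairs this. The missing step, which the paper takes explicitly, is to descend far enough along the equivalent sequence: since only finitely many marked atoms occur on row $0$ and each has some witnessing column there, one can pick $y_j$ in the sequence below all of these witnesses, so that every marked atom of row $y_j$ (the same set as for row $0$, by equivalence) has a representative at a column in $[y_j,0]$; then $y_{past}=y_j$. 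Second, your justification of condition (1) --- that $y_{past}-y_{min}\ge\kphi$ ``makes room'' for the $\dl$-witnesses because length constraints bound how far left a witness can sit --- only works for requests $\dl\psi$ in which $\psi$ carries a length constraint. A request $\dl p$ with no metric component may be fulfilled in the infinite structure only at a column arbitrarily far to the left of $y_{past}-\kphi$. The correct choice, as in the paper, is to place $y_{min}$ below at least one witness for each of the finitely many $\dl$-requests of the finitely many diagonal points in $[y_{past},y_{max}]$; the $\kphi$ gap is what the gluing construction in the converse direction needs, not what fulfils condition (1). Both gaps are repairable without changing your strategy.
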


\begin{proof} 
We start with the left-to-right direction. 
From the satisfiability of $\varphi$ over $\bbZ$, it follows that 
there exists a fulfilling compass $\varphi$-structure $\cG=\langle\bbP_{\bbZ}, 
\cL\rangle$ which features $\varphi$.
Hence, to prove the claim it suffices to show that there exist $y_{min}, y_{past}, 
y_{fut}$, and $y_{max}$, with $y_{min} < y_{past} < 0 < y_{fut} < y_{max}$, that 
satisfy the conditions of Definition \ref{def:compassgenint}. 
Since the index of $\equiv$ is finite, there exists an infinite-to-the-past 
sequence of elements $\cS=y_{-1}>y_{-2}>\ldots$ such 
that, for every $i,j\in \bbN$, $\cC_{y_i} \equiv\cC_{y_j}$.  
Without loss of generality, we can assume that $y_{-1}=0$.
Since $\cS$ is infinite to the past, there exists $j<-1$ 
such that, for every $(F,\Psi, h) \in \cA^{\cM}_\varphi$ with 
$\cC_{y_j}(F,\Psi, h)>0$, there exists $y_j\leq x\leq y_{-1}$, 
with $\cM(x,y_{-1})=(F,\Psi,h)$. We put $y_{past}=y_{j}$. 
The elements $y_{max}$ and $y_{fut}$ can be selected using 
the very same argument of the proof of Theorem \ref{teo:satiffcompassgen} 
guaranteeing that $0 < y_{fut} <y_{max}$. 
Next, we take an element $\oy < y_{past}$ such that, for every 
$y_{past}\leq y \leq y_{max}$ and  every $\psi \in \req_l(\cL(y,y))$,
there exists an element $\oy\leq x\leq y$ such that $\psi \in \cL(x,y)$.
We put $y_{min}=\oy$.
Let $\cG'=\langle\bbP_{\bbD}, \cL'\rangle$ be a compass $\varphi$-structure
such that $D=\{y_{min},\ldots,y_{max}\}$ and, for every $(x,y)\in\bbP_{\bbD}$, 
$\cL'(x,y)=\cL(x,y)$. It can be easily checked that $\cG'$ is a 
$\bbZ$-compass generator for $\varphi$. 

\medskip

The right-to-left direction is much more involved with respect to the case of
$\bbN$. We give a sketch of the proof only, making use of the 
pictorial representation given in Figure \ref{fig:intproof}.
Figure \ref{fig:intproof}.a depicts a $\bbZ$-compass generator 
$\cG=\langle\bbP_{\bbD}, \cL\rangle$ for some MPNL formula 
$\varphi$. The vertical segments that will be used to fill in the
gaps that will appear during the construction of the infinite prefix
are suitably numbered; lowercase letters will be used to identify
the vertical segments that will be exploited to fill in the gaps 
in between $0$ and $y_{max}$; upper case letters identify 
the marked atoms. 

We first define the labeling of points $(x,y)$, 
with $x \leq y \leq y_{max}$ (the infinite prefix). To this end, 
we leave the labeling of points $(x,y)$, with $y_{past} \leq x 
\leq y \leq y_{max}$, unchanged, and we define the 
labeling of the other points as follows (in particular, we 
suitably redefine the labeling of points $(x,y)$, with $y_{min} \leq x 
< y_{past}$ and $y_{min} \leq y \leq y_{max}$). 

Let us first consider the $\dl$-requests of points $(x,x)$, 
with $y_{past}\leq x\leq 0$. By condition 1 of Definition 
\ref{def:compassgenint}, all of them are satisfied in $\cG$.
We rearrange the structure of $\cG$ in order to generate a 
fulfilling infinite-to-the-past compass $\varphi$-structure $\cG'$. 
To give an intuitive account of the construction, suppose that the 
set of points that satisfy the $\dl$-requests is included in the 
set of points belonging to the vertical segments $1, \ldots, 5$ of Figure 
\ref{fig:intproof}.a. By exploiting condition 3 of Definition 
\ref{def:compassgenint}, we generate a sufficient number of
copies $\cT_1, \ldots, \cT_n$ of the triangle $\cT_0$ ($2$ 
copies in Figure \ref{fig:intproof}.b), and we append them 
one below the other starting from $\cT_0$ ($\cT_1$ immediateley
below $\cT_0$, $\cT_2$ immediately below $\cT_1$, and so on).
$\dl$-requests involving length constraints with $k < k_\varphi$ 
are satisfied by points belonging to the vertical segments rooted 
at the right end of the horizontal edge of $\cT_1$ only (segments 
$4$, and $5$); the other $\dl$-requests are satisfied by points 
belonging to the vertical segments rooted at the left end of 
the horizontal edge of $\cT_1$, at $\cT_2, \ldots$, and at $\cT_n$ 
(segments $1, 2,$ and $3$). Notice that vertical segments in $\cG$ which are
sufficiently far way from the diagonal (points $(x,y)$ such that
$y - x \geq \kphi$) are insensitive to $\dep{R}$-preserving 
changes of the labels of their endpoints (segments $1, 2,$ and 
$3$ in Figure \ref{fig:intproof}.c). 

Let us consider now points $(x,x)$, with $0\leq x\leq y_{max}$, and
suppose that the set of points that satisfy their $\dl$-requests in $\cG$
is included in the set of points belonging to the vertical segments $a, 
b, c, d,$ and $e$ of Figure \ref{fig:intproof}.a. In $\cG'$, these 
$\dl$-requests are satisfied by (re)introducing the vertical segments 
$a, b, c, d,$ and $e$ above the appropriate vertical segments $1, 2, 
3, 4,$ and $5$, possibly duplicating some of them (this is the case with $a$ 
in Figure \ref{fig:intproof}.c). As before, vertical segments in 
$\cG$ which are sufficiently far way from the diagonal (points $(x,y)$ 
such that $y - x \geq \kphi$) are insensitive to $\dep{R}$-preserving 
changes of the labels of their endpoints (segments $b$ and $c$
in Figure \ref{fig:intproof}.c). 

The procedure that we applied to fulfill the $\dl$-requests of points 
$(x,x)$, with $y_{past}\leq x\leq 0$, can then be applied to satisfy the 
$\dl$-requests of points $(x,x)$, with $2 \cdot y_{past}\leq x
\leq y_{past}$, of points $(x,x)$, with $3 \cdot y_{past}\leq x
\leq 2 \cdot y_{past}$, and so on, to obtain a correct
labeling for all points $(x,x)$ of the infinite prefix.
 
To complete the labeling of $\cG'$,
we need to specify the labeling of points $(x,y)$, with $y_{max} < 
y$ (the infinite suffix). To this end, we apply the procedure of Theorem 
\ref{teo:satiffcompassgen} to $y_{fut}$ and $y_{max}$. The resulting 
compass $\varphi$-structure $\cG'= \langle\bbP_{\bbZ},\cL'\rangle$ is 
a fulfilling compass $\varphi$-structure featuring $\varphi$.
\end{proof}

\begin{theorem}\label{teo:boundcompassgenint}
Let $\varphi$ be an MPNL formula. If there exists a  $\bbZ$-compass 
generator $\cG=(\bbP_\bbD,\cL)$ that features $\varphi$, then there 
exists a $\bbZ$-compass generator $\cG'=(\bbP_{\bbD'},\cL')$, that 
features $\varphi$, with $|D'|\leq \left( 2^{3|\varphi|+1}+4\right) 
\cdot \left( \frac{|\varphi|^2}{4} + \frac{|\varphi|}{2} +
1 \right)^{2^{3|\varphi|}} + 1$.
\end{theorem}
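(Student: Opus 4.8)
The plan is to run the same anchor-and-contract argument as in the proof of Theorem~\ref{teo:boundcompassgen}, but now the generator has two sides to shrink --- a past side between $y_{min}$ and $0$ and a future side between $0$ and $y_{max}$ --- and three defining conditions of Definition~\ref{def:compassgenint} to keep alive instead of two. First I would fix a minimal set of \emph{anchor rows} $S=\{\oy_0<\cdots<\oy_m\}\subseteq D$ that contains the five distinguished rows $y_{min},y_{past},0,y_{fut},y_{max}$, together with: (i) enough rows in $[0,y_{fut}]$ whose column indices realize, at row $y_{fut}$, every marked atom $(F,\Psi,h)\in\cA^{\cM}_\varphi$ with $\cC_{y_{fut}}(F,\Psi,h)>0$ --- this is the analogue of condition~(v) on $S$ in Theorem~\ref{teo:boundcompassgen} and is what lets us keep condition~2 alive when the future side is shrunk; and (ii) enough rows in $[y_{past},0]$ whose column indices realize, at row $0$, every marked atom with $\cC_{y_{past}}(F,\Psi,h)>0$ --- this protects condition~3. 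Since $|\cA^{\cM}_\varphi|\le 2^{3|\varphi|}$, each of (i) and (ii) contributes at most $2^{3|\varphi|}$ rows, so $|S|\le 2^{3|\varphi|+1}+5$, i.e. $S$ splits $D$ into at most $2^{3|\varphi|+1}+4$ open gaps.

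Next I would iterate the contraction step: while there exist two rows $y<y'$ lying strictly inside a single gap $(\oy_j,\oy_{j+1})$ with $\cC_y\equiv\cC_{y'}$, apply the construction underlying Lemma~\ref{lem:removal} to the band between $y$ and $y'$, getting a structure with $|D|$ decreased by $y'-y$, and re-coordinatize the domain so that $(0,0)$ keeps the name $0$ (relevant only when the band lies in the past side: deleting rows below $0$ pulls $y_{min}$ up toward $0$, which is harmless for condition~3, while condition~1 and the slack $y_{past}-y_{min}\ge\kphi$ are kept by refusing to contract pairs with $y'\ge y_{past}-\kphi$, or equivalently by re-selecting $y_{past}$ afterwards; the symmetric remark concerns $y_{fut}$ and $y_{max}$). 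As in Theorem~\ref{teo:boundcompassgen} there are three kinds of bands. A band inside $(y_{fut},y_{max})$ is treated by the plain construction of Lemma~\ref{lem:removal}: triples may drop out of row $y_{max}$, but the support at $y_{max}$ stays included in that at $y_{fut}$ and the second components of $\cM(\cdot,y_{max})$ left of $y_{fut}$ stay empty, so condition~2 survives. A band inside one of $(0,y_{fut})$, $(y_{past},0)$, $(y_{min},y_{past})$ is treated by the construction of Lemma~\ref{lem:removal} with the witnesses of step~4 picked so as to fix the labels of the anchor \emph{columns} at rows $y_{fut}$ and $0$ --- exactly the device used for bands strictly below $\oy_{m-1}$ in the proof of Theorem~\ref{teo:boundcompassgen} --- which preserves conditions~2 and~3. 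Condition~1, the fulfilment of all $\dl$-requests of diagonal points in $[y_{past},y_{max}]$, is precisely the property that the construction of Lemma~\ref{lem:removal} is designed to maintain, so it persists through each step, once one re-checks the requests of $(y_{past},y_{past})$, which are fulfilled in $\cG$ by hypothesis and whose fulfillers lie in the protected strip $[y_{min},y_{past}]$ of row $y_{past}$.

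Termination is immediate since $|D|$ strictly decreases; on termination the rows strictly between two consecutive anchors are pairwise $\equiv$-inequivalent, so, since the index of $\equiv$ is at most $\left(\tfrac{|\varphi|^2}{4}+\tfrac{|\varphi|}{2}+1\right)^{2^{3|\varphi|}}$, each of the $\le 2^{3|\varphi|+1}+4$ gaps spans at most that many rows. Hence $|D'|\le\left(2^{3|\varphi|+1}+4\right)\cdot\left(\tfrac{|\varphi|^2}{4}+\tfrac{|\varphi|}{2}+1\right)^{2^{3|\varphi|}}+1$. Since the triangle at and below $0$ is never relabeled, $\cG'$ still features $\varphi$, and by the checks above it still satisfies conditions~1--3 of Definition~\ref{def:compassgenint}, so $\cG'$ is the required $\bbZ$-compass generator.

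The step I expect to be the real obstacle is the verification, for the past-side bands, that a single application of the construction underlying Lemma~\ref{lem:removal} can \emph{simultaneously} (a) preserve the anchor columns at row $y_{past}$, so the marked atoms feeding condition~3 do not disappear, (b) preserve their witnesses at row $0$, and (c) preserve the $\dl$-fulfillers of the diagonal points that migrate into the range $[y_{past},y_{max}]$, all while leaving the $\kphi$-slack below $y_{past}$ intact. In Theorem~\ref{teo:boundcompassgen} the analogous check was dispatched with a one-line ``appropriate choice of witnesses''; here contraction in the past interacts with the origin row $0$ through condition~3 and with the diagonal through condition~1, so the step-4 witness bookkeeping of Lemma~\ref{lem:removal} has to be organized with noticeably more care, and that is where the bulk of a fully detailed proof would be spent.
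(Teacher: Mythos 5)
Your proposal follows essentially the same route as the paper's proof: fix a minimal anchor set containing $y_{min},y_{past},0,y_{fut},y_{max}$ plus at most $2^{3|\varphi|}$ rows protecting the marked atoms at $y_{fut}$ (condition~2) and at most $2^{3|\varphi|}$ protecting those realized at row $0$ for condition~3, then repeatedly contract $\equiv$-equivalent rows inside a single gap via the construction of Lemma~\ref{lem:removal} with witnesses chosen to preserve the anchors, and count gaps times the index of $\equiv$. The preservation issues you flag at the end are real but are exactly the ones the paper also dispatches with a one-line appeal to ``an appropriate choice of witnesses,'' so your write-up is, if anything, more explicit than the published argument.
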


\begin{proof}

Let  $\cG=(\bbP_\bbD,\cL)$ be a $\bbZ$-compass generator, that features 
$\varphi$, and let $y_{fut} $ and $y_{past}\in D$ satisfy the conditions 
of Definition \ref{def:compassgenint}. 
We define a minimal set $S=\{\oy_0,\ldots,\oy_m\}$ of elements in $D$ 
such that (i) $\oy_0 = y_{min}$, (ii) $\oy_{m}=y_{max}$, (iii) $\oy_{m-1}=y_{fut}$,
(iv) $\oy_j<\oy_{j+1}$, for each $0 \leq j < m$, (v) $\oy_j= 0$, for some 
$1 < j < m$, (vi) $\oy_{j'} = y_{past}$, for some $1 < j' < j$, (vii) 
$\oy_{j''} = y_{fut}$, for some $j < j'' < m$, (viii) for every $(F,\Psi, h) 
\in \cA^{\cM}_\varphi$, if $ \cC_{y_{past}}((F,\Psi, h)) > 0$, then there 
exists $l \leq j$ such that $\cM(\oy_l, 0)=(F,\Psi, h)$, and (ix) for every 
$(F,\Psi, h) \in \cA^{\cM}_\varphi$, if $\cC_{y_{fut}}(F,\Psi, h) >0$, then 
there exists $\oy_l\leq\oy_{fut}$ such that $\cM(\oy_l,y_{fut}) = (F,\Psi, h)$. 
From the minimality requirement, it follows that $m\leq  2^{3|\varphi|+1}+5$.

We build a finite sequence of $\bbZ$-compass generators $\cG_0 \supset \cG_1 \supset 
\ldots \supset \cG_n$, whose last element is a small enough $\bbZ$-compass generator
$\cG_n$.
We start with $\cG_0 = \cG$. Now, let $\cG_i=(\bbP_{\bbD_i}, \cL_i)$ be the $i$-th 
compass generator in the sequence and let $S_i=\{\oy_0,\ldots,\oy_m\}$ be 
the above-defined minimal set of elements in $D_i$. If there exist no $y, y'$, with 
$\oy_j \leq y < y'<\oy_{j+1}$ for some $0 \leq j < m$, such that $\cC_{y}\equiv\cC_{y'}$,
we put $n = i$, and we end the construction. 
Otherwise, as in Theorem \ref{teo:boundcompassgen}, we apply (the construction of) 
Lemma~\ref{lem:removal} to $y$ and $y'$ to obtain a compass generator $\cG_{i+1} 
= (\bbP_{\bbD_{i+1}}, \cL_{i+1})$, with $\lenf{D_{i+1}} = \lenf{D_i} - (y'-y)$. 

At the end of the procedure, all the horizontal configurations in between two 
consecutive elements $\oy_j,\oy_{j+1} \in S$ are pairwise non-equivalent. From 
this, it immediately follows that the final $\bbZ$-compass generator $\cG_n=(\bbP_{\bbD_n},\cL_n)$ 
is such that $|D_n|\leq \left( 2^{3|\varphi|+1}+4\right) \cdot \left( \frac{|\varphi|^2}{4} + 
\frac{|\varphi|}{2} +1 \right)^{2^{3|\varphi|}} + 1$.
\end{proof}
\newcommand{\ocC}{\overline{\cC}}
\section{An EXPSPACE decision procedure}\label{sec:decisionproc}

In this section, we describe a decision procedure that solves the 
satisfiability problem for MPNL over the integer numbers. 
Both the procedure for the finite case and that for the
natural numbers can be easily tailored from it. Let $\varphi$
be the MPNL formula to check for satisfiability. In order to
establish whether or not there exists a finite model satisfying
$\varphi$, we can proceed as follows. First, we represent a finite 
model in $\bbZ$ by means of the following formula:

\[ \begin{gathered}\psi_{fin}=\#_{all} \wedge \# \wedge
\Dr \Dl (\neg \pi \wedge \Dl \neg \#_{all}) \wedge
\Dl \Dr (\neg \pi \wedge \Dr \neg \#_{all}) \wedge \\
\Dr (\neg \pi \wedge \Dr \Dl \neg \#_{all}) \wedge
\Dl (\neg \pi \wedge \Dl \Dr \neg \#_{all}) \wedge \\
\G ( \# \leftrightarrow (\#_{all} \vee \dr\dl 
\#_{all} \vee \dl\dr \#_{all} \vee (\dr\dr\dl \#_{all} \wedge 
\dl\dl\dr \#_{all}))), \end{gathered}\]
where $\G$ is the commonly-used universal modality \cite{thesispietro},
$\#_{all}$ holds over one and one interval that collects all
points of the finite model and $\#$ holds over all and only the 
subintervals of such a $\#_{all}$-interval.

Under the assumption that $\#_{all}$ and $\#$ do not appear in 
$\varphi$, we can replace $\varphi$ by a formula $tr(\varphi)$
such that $\varphi$ has a finite model if and only if 
$\psi_{fin} \wedge \dl \dr \dr tr(\varphi)$ has a model in $\bbZ$. 
The formula $tr(\varphi)$ is inductively defined as follows:
(i) if $\varphi = p$ or $\varphi=len_{<k}$, then $tr(\varphi)=\varphi\wedge \#$,
(ii) if $\varphi=\neg \psi$, then $tr(\varphi)=\neg \# \vee \neg tr(\psi)$,
(iii) if $\varphi=\psi_1 \vee \psi_2$, then $tr(\varphi)=(\psi_1\wedge \#) \vee (\psi_2 \wedge \#)$,
(iv) if $\varphi= \dr \psi$, then $tr(\varphi)=\dr(\# \wedge \psi)$,
(v) if $\varphi= \dl \psi$, then $tr(\varphi)=\dl(\# \wedge \psi)$. 

Similarly, it is possible to prove that an MPNL formula $\varphi$ 
has a model over the linear order of natural numbers if and only if 
$\psi_{nat} \wedge \dl \dr \dr tr(\varphi)$ has a model in $\bbZ$,
where  $tr(\varphi)$ is defined as above and $\psi_{nat}$ is defined
as follows:
\[ \begin{gathered}\psi_{nat}= \# \wedge \Dl \neg \# \wedge \Dl\Dl \neg \# \wedge \G((\neg 
\# \wedge \dr \#)\rightarrow(\Dr \# \wedge \Dr\Dr \#))\end{gathered}\]

\begin{figure}[!]
\scriptsize
\centering
\begin{code}{79mm}

\FUNCTION{GuessConfiguration}{}
\BEGIN
\FORALL (F,\Psi,h)\in \cA^\cM_\varphi, \ \ \  \cC(F,\Psi,h)\gets 0; \\
\LET S_r\subseteq\{ \psi \in \closure(\varphi\ |\ \dr\psi \in \closure(\varphi))\};\\
\LET S_{l}\subseteq \{ \psi \in \closure(\varphi\ |\ \dl\psi \in \closure(\varphi))\};\\
\FORALL 1\leq i < \kphi \\
\BEGIN
\LET F \text{ an atom s.t. } \req_r(F)=S_r \text{ and } Len(F)=i ;\\ 
\LET \Psi\subseteq \{ \psi \in \closure(\varphi)\ |\ \dr\psi \in \closure(\varphi))\} ;\\
\ocC(F,\Psi,i)\GETS 1;\\
\END\\
\FORALL (F,\Psi,\kphi)\in \cA^\cM_\varphi \text{ s.t. }   \req_r(F)=S_r
\\
\BEGIN
\LET 0\leq i\leq \kphi, \ \ \ \ocC(F,\Psi,h)\GETS i
\END\\
\RETURN\ \ocC;
\END \\ \\

\FUNCTION{Merge}{\cC,\cC'}
\BEGIN
\FORALL (F,\Psi,h)\in\cA^{\cM}_{\varphi}\\
\BEGIN
 \ocC(F,\Psi,h)\GETS \cC(F,\Psi,h)+\cC'(F,\Psi,h);\\
 \END\\
\RETURN\ \ocC;
\END
\\ \\
\FUNCTION{Len}{F}
\BEGIN
\IF \exists 1\leq h<\kphi \text{ s.t. } \neg len_{<h} \in F \wedge len_{<h+1} \in F \THEN \RETURN\ h
\ELSE \RETURN \kphi
\END
\\ \\

\end{code}
\begin{code}{71mm}

\FUNCTION{MA$\_$set}{\cC}
\BEGIN

R=\{ (F,\Psi,h)\ |\ \cC(F,\Psi,h)>0 \};\\
\RETURN\ R;

\END \\ \\

\FUNCTION{NC$\_$ZeroToFut}{\cC^{current}}
\BEGIN

\LET S_r\subseteq\{ \psi \in \closure(\varphi\ |\ \dr\psi \in \closure(\varphi))\};\\
\LET S_l\subseteq \{ \psi \in \closure(\varphi\ |\ \dl\psi \in \closure(\varphi))\};\\
\LET \begin{array}{l}F_\pi \text{ an atom with $len_{<1} \in F_\pi$, $\req_r(F_\pi)=S_r$,}\\\text{ and  $\req_l(F_\pi)=S_l$ };\end{array}\\
\FORALL (F,\Psi,h)\in \cA^\cM_\varphi, \cC(F,\Psi,h)\gets 0;\\

\cC(F_\pi,\req_r(F)\setminus F_\pi,1)\gets 1;\\
\FORALL (F,\Psi,h)\in\cA^{\cM}_{\varphi} \\
\BEGIN
\FOR(1\leq i\leq \cC^{current}(G, \Psi,h) )\\
\BEGIN
\IF h=\kphi \THEN k\GETS \kphi
\ELSE k\GETS h+1\\
\LET\begin{array}{l} G \text{ s.t. $Len(G)=k$, $\req_r(G)=S_r$,}\\\text{ and  $\req_l(G)=\req_l(F)$};\end{array}\\
\cC(G,\Psi\setminus G, k)\gets \cC(G,\Psi\setminus G, k) +1;\\
\END
\END\\
\IF \left(\begin{array}{l}\exists \psi \in S_l \text{ s. t. } \forall (F,\Psi,h)\in \cA^\cM_\varphi \text{ with } \psi \in A \\
\text{ we have } \cC(F,\Psi,h)=0\end{array} \right) \THEN \RETURN \FALSE;\\
\RETURN\ \cC;\\
\END\\ \\

\end{code}
\label{fig:codeaux}
\caption{Auxiliary procedures for checking the satisfiability of $\phi$ over the integers.}
\end{figure}

\begin{figure}[!]
\scriptsize
\centering
\begin{code}{79mm}

\FUNCTION{NC$\_$MinToPast}{\cC^{current}}
\BEGIN

\LET S_r\subseteq\{ \psi \in \closure(\varphi\ |\ \dr\psi \in \closure(\varphi))\};\\
\LET S_l\subseteq \{ \psi \in \closure(\varphi\ |\ \dl\psi \in \closure(\varphi))\};\\
\LET\begin{array}{l} F_\pi \text{ an atom with $len_{<1} \in F_\pi$, $\req_r(F_\pi)=S_r$},\\ \text{and  $\req_l(F_\pi)=S_l$ };\end{array}\\
\FORALL (F,\Psi,h)\in \cA^\cM_\varphi \cC(F,\Psi,h)\gets 0;\\
\cC(F_\pi,\req_r(F)\setminus F_\pi,1)\gets 1;\\
\FORALL (F,\Psi,h)\in\cA^{\cM}_{\varphi} \\
\BEGIN
\FOR(1\leq i\leq \cC^{current}(G, \Psi,h) )\\
\BEGIN
\IF h=\kphi \THEN k\GETS \kphi
\ELSE k\GETS h+1\\
\LET \begin{array}{l}G \text{ s.t. $Len(G)=k$, $\req_r(G)=S_r$},\\ \text{ and  $\req_l(G)=\req_l(F)$};\end{array}\\
\cC(G,\Psi\setminus G, k)\gets \cC'(G,\Psi\setminus G, k) +1;\\
\END
\END\\
\RETURN\ \cC;\\
\END
\\ \\
\FUNCTION{NC$\_$LeftRight}{\cC^{left},\cC^{right}}
\BEGIN

\LET S_r\subseteq\{ \psi \in \closure(\varphi\ |\ \dr\psi \in \closure(\varphi))\};\\
\LET S_l\subseteq \{ \psi \in \closure(\varphi\ |\ \dl\psi \in \closure(\varphi))\};\\
\LET \begin{array}{l}F_\pi \text{ an atom with $len_{<1} \in F_\pi$, $\req_r(F_\pi)=S_r$},\\ \text{ and  $\req_l(F_\pi)=S_l$ }\end{array};\\
\FORALL (F,\Psi,h)\in \cA^\cM_\varphi \ocC^{right}(F,\Psi,h)\gets 0;\\
\FORALL (F,\Psi,h)\in \cA^\cM_\varphi \ocC^{left}(F,\Psi,h)\gets 0;\\
\ocC^{right}(F_\pi,\req_r(F)\setminus F_\pi,1)\gets 1;\\
\FORALL (F,\Psi,h)\in\cA^{\cM}_{\varphi} \\
\BEGIN
\FOR(1\leq i\leq \cC^{right}(G, \Psi,h) )\\
\BEGIN
\IF h=\kphi \THEN k\GETS \kphi
\ELSE k\GETS h+1\\
\LET \begin{array}{l}G \text{ s.t. $Len(G)=k$, $\req_r(G)=S_r$,}\\ \text{ and  $\req_l(G)=\req_l(F)$};\end{array}\\
\ocC^{right}(G,\Psi\setminus G, k)\gets \ocC^{right}(G,\Psi\setminus G, k) +1;\\
\END\\
\FOR(1\leq i\leq \cC^{left}(G, \Psi,h) )\\
\BEGIN
\IF h=\kphi \THEN k\GETS \kphi
\ELSE k\GETS h+1\\
\LET\begin{array}{l} G \text{ s.t. $Len(G)=k$, $\req_r(G)=S_r$,}\\ \text{ and  $\req_l(G)=\req_l(F)$};\end{array}\\
\ocC^{left}(G,\Psi\setminus G, k)\gets \ocC^{left}(G,\Psi\setminus G, k) +1;\\
\END
\END\\
\IF \left( \begin{array}{l}\exists \psi \in S_l \text{ s. t. } \forall (F,\Psi,h)\in \cA^\cM_\varphi \text{ with } \psi \in A \\
\text{ we have } \ocC^{left}(F,\Psi,h)=\ocC^{right}(F,\Psi,h)=0 \end{array}\right) \THEN \RETURN \FALSE;\\
\RETURN\ (\ocC^{left},\ocC^{right});\\
\END

\end{code}
\begin{code}{71mm}

\FUNCTION{MPNL-INTEGER-SAT}{\varphi}
\BEGIN

BOUND\gets \left( 2^{3|\varphi|+1}+4\right) \cdot \left( \frac{|\varphi|^2}{4} + 
\frac{|\varphi|}{2} +1 \right)^{2^{3|\varphi|}} ;\\

\LET S_r\subseteq\{ \psi \in \closure(\varphi\ |\ \dr\psi \in \closure(\varphi))\};\\
\LET S_l\subseteq \{ \psi \in \closure(\varphi\ |\ \dl\psi \in \closure(\varphi))\};\\
\LET \begin{array}{l}F_\pi \text{ an atom with $len_{<1} \in F_\pi$, $\req_r(F_\pi)=S_r$,}\\\text{ and  $\req_l(F_\pi)=S_l$ };\end{array}\\
\FORALL (F,\Psi,h)\in \cA^\cM_\varphi, \ \ \cC^{min}(F,\Psi,h)\gets 0;\\
\cC^{min}(F_\pi,\req_r(F)\setminus F_\pi,1)\gets 1;\\
\cC^{past}\GETS GuessConfiguration();\\
\cC\GETS \cC^{min};\\
steps\GETS 0;\\
\WHILE(\cC \not\equiv \cC^{past} \vee steps<\kphi)\\
\BEGIN
\IF steps > BOUND \THEN \RETURN \FALSE\\
\cC\GETS NC\_MinToPast(\cC);\\
steps \GETS steps+1;\\
\END\\
\cC^{left}\GETS \cC;\\
\FORALL (F,\Psi,h)\in \cA^\cM_\varphi, \  \cC^{right}(F,\Psi,h)\gets 0;\\
steps\GETS 0;\\
\WHILE(MA\_set (\cC^{past}) \not\subseteq MA\_set(\cC^{right}))\\
\BEGIN
\IF steps > BOUND \THEN \RETURN \FALSE\\
(\cC^{left},\cC^{right})\GETS NC\_LeftRight(\cC^{left},\cC^{right});\\
steps \GETS steps+1;\\
\END\\
\cC^{fut}\GETS GuessConfiguration();\\
\cC\GETS Merge(\cC^{left},\cC^{right});\\
steps\GETS 0;\\
\WHILE(\cC\not\equiv \cC^{fut})\\
\BEGIN
\IF steps > BOUND \THEN \RETURN \FALSE\\
\cC \GETS NC\_ZeroToFut(\cC);\\
steps \GETS steps+1;\\
\END\\
\cC^{max}\GETS \cC;\\
\cC^{left}\GETS \cC;\\
\FORALL (F,\Psi,h)\in \cA^\cM_\varphi,\ \  \cC^{right}(F,\Psi,h)\gets 0;\\
steps\GETS 0;\\
\WHILE \left( \begin{array}{l} (MA\_set(Merge(\cC^{left},\cC^{right}))\\ \supseteq MA\_set(\cC^{max}) \rightarrow  \exists (F,\Psi,h)\in \cA^\cM_\varphi \\ \text{ with } \cC^{left}(F,\Psi,h)>0 \wedge \Psi\neq \emptyset)\\ \vee steps>\kphi  \end{array} \right)\\
\BEGIN
\IF steps > BOUND \THEN \RETURN \FALSE\\
(\cC^{left},\cC^{right})\GETS NC\_LeftRight(\cC^{left},\cC^{right});\\
steps \GETS steps+1;\\
\END\\
\RETURN\ \TRUE;

\END

\end{code}
\label{fig:code}
\caption{The procedure for checking the satisfiability of $\phi$ over the integers.}
\end{figure}
The detailed code of the decision procedure is reported in Figure \ref{fig:code}. 
It builds a tentative $\bbZ$-compass generator for $\varphi$ starting from $y_{min}$ and exploring 
two consecutive horizontal configurations at every step.
Every configuration is represented using an exponential number of counters, 
bounded by the maximum size for a $\bbZ$-compass generator given in Theorem 
\ref{teo:boundcompassgenint} (doubly exponential in the size of $|\varphi|$).
However, assuming that the values of all counters are encoded in binary, the 
maximum value for each counter takes an exponential storage space. The very same 
argument can be used to provide an exponential space bound for the $steps$ counter.
Moreover, the procedure needs to keep track of a constant number of horizontal 
configurations only ($\cC^{{min}}, \cC^{{past}}, \cC^{0}, \cC^{fut}, \cC^{{max}},\ocC, 
\cC,\cC',$ $\cC^{right}$, and $\cC^{left}$). Pairing this result with the 
EXPSPACE-hardness given in \cite{rpnl_with_constraints}, we can state the 
following theorem.

\begin{theorem}\label{teo:expspace}
The satisfiability problem for MPNL,  interpreted over (any subsets of) the integers 
is EXP-SPACE-complete.
\end{theorem}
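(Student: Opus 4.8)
The plan is to obtain EXPSPACE-completeness by matching the known lower bound with an upper bound extracted from the procedure \textsc{MPNL-Integer-Sat} of Figure~\ref{fig:code}. EXPSPACE-hardness of MPNL with a binary encoding of metric constraints is proved in~\cite{rpnl_with_constraints}; the reduction used there produces models having a least point, so it applies without change over finite linear orders, over $\bbN$, and over $\bbZ$. Thus the whole work lies in proving membership in EXPSPACE.

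First I would treat the integer case. The core claim is that \textsc{MPNL-Integer-Sat} has an accepting computation on $\varphi$ if and only if there is a $\bbZ$-compass generator for $\varphi$ whose domain has size at most $BOUND+1$. For completeness of the procedure, Theorem~\ref{teo:satiffcompassgenint} supplies a $\bbZ$-compass generator whenever $\varphi$ is satisfiable over $\bbZ$, and Theorem~\ref{teo:boundcompassgenint} one of domain size at most $BOUND+1$; its sequence of horizontal configurations, together with the request sets $S_r,S_l$ and the configurations occurring at rows $y_{past}$ and $y_{fut}$, can be fed to the nondeterministic guesses so as to produce an accepting run. Soundness is the more delicate direction: from an accepting run one reconstructs a compass $\varphi$-structure row by row, using that each auxiliary routine (\textsc{NC\_MinToPast}, \textsc{NC\_ZeroToFut}, \textsc{NC\_LeftRight}) computes, from the horizontal configuration of one row, the configuration of the next row in such a way that the $\dep{R}$ view-to-type dependency is respected (through the $\req_l/\req_r$ bookkeeping carried by the guessed atoms), length consistency holds (through $Len$), and the $\Psi$-component of each marked atom is updated exactly as the marking function $\cM$ prescribes; a row all of whose marked atoms have $\Psi=\emptyset$ then certifies that the $\dr$-requests of the corresponding point-intervals are fulfilled, while the \textsc{NC\_LeftRight} phases certify the $\dl$-requests of the diagonal points in $[y_{past},y_{max}]$. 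The termination tests of the \textbf{while} loops are shaped precisely so that an accepting run realizes conditions~1--3 of Definition~\ref{def:compassgenint}: stabilization modulo $\equiv$ together with $steps\geq\kphi$ forces $\cC_{y_{past}}$ to recur on row $0$ and $\cC_{y_{fut}}$ to recur on row $y_{max}$, while the \textsc{MA\_set}-inclusion tests force the required coverage of marked atoms. Crucially, working throughout modulo the finite-index equivalence $\equiv$, i.e.\ saturating each counter at the threshold $p\cdot f+p$, is sound exactly because of Lemma~\ref{lem:removal}, which shows that equivalent horizontal configurations are interchangeable without destroying fulfilment.

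For the space bound, a horizontal configuration is a map $\cA^{\cM}_\varphi\to\{0,\ldots,BOUND\}$; since $|\cA^{\cM}_\varphi|\leq 2^{3|\varphi|}$ and $\log BOUND=O\!\left(2^{3|\varphi|}\log|\varphi|\right)$, a configuration occupies $2^{O(|\varphi|)}$ bits with counters written in binary, and the procedure stores only a constant number of configurations ($\cC^{min},\cC^{past},\cC^{left},\cC^{right},\cC^{fut},\cC^{max},\cC,\ocC$), while the $steps$ counter is bounded by $BOUND$ and so also fits in $2^{O(|\varphi|)}$ bits. Hence \textsc{MPNL-Integer-Sat} runs in nondeterministic exponential space, and by Savitch's theorem ($\mathrm{NEXPSPACE}=\mathrm{EXPSPACE}$) the satisfiability problem over $\bbZ$ is in EXPSPACE; with the hardness above, it is EXPSPACE-complete over $\bbZ$. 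The finite and natural cases reduce to it: with the polynomial-size formulas $\psi_{fin}$ and $\psi_{nat}$ and the translation $tr(\cdot)$ defined above, $\varphi$ has a finite model (resp.\ a model over $\bbN$) iff $\psi_{fin}\wedge\dl\dr\dr\,tr(\varphi)$ (resp.\ $\psi_{nat}\wedge\dl\dr\dr\,tr(\varphi)$) has a model over $\bbZ$, and since these formulas have size $O(|\varphi|)$ the EXPSPACE bound transfers.

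I expect the correctness of \textsc{MPNL-Integer-Sat} to be the main obstacle: one must check carefully that the bookkeeping of the \textsc{NC}-routines --- propagation of $\req_l$, $\req_r$, $Len$, and the pending $\dr$-request set from each row to the next --- genuinely mirrors a row-by-row construction of a $\bbZ$-compass generator in the sense of the proof of Theorem~\ref{teo:satiffcompassgenint}, and that reasoning modulo $\equiv$ never discards a request that later becomes necessary, which is exactly where Lemma~\ref{lem:removal} must be invoked. The remaining ingredients --- the hardness transfer and the space accounting --- are routine.
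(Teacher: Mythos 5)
Your proposal is correct and follows essentially the same route as the paper: extract the EXPSPACE upper bound from the space usage of \textsc{MPNL-Integer-Sat} (constant number of horizontal configurations, each made of exponentially many counters whose values, bounded by the doubly-exponential bound of Theorem~\ref{teo:boundcompassgenint}, fit in exponentially many bits when written in binary), pair it with the known hardness from \cite{rpnl_with_constraints}, and handle finite linear orders and $\bbN$ via $\psi_{fin}$, $\psi_{nat}$, and $tr(\cdot)$. If anything, you are more explicit than the paper itself, which leaves the correctness of the procedure and the appeal to $\mathrm{NEXPSPACE}=\mathrm{EXPSPACE}$ implicit.
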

\section*{Acknowledgements}
We would like to thank the anonymous reviewers for their useful comments
and suggestions. 
This research has been partially supported by the EU project \emph{FP7-ICT-223844 CON4COORD}
(Davide Bresolin), the Italian PRIN project \emph{Innovative and multi-disciplinary 
approaches for constraint and preference reasoning} (Angelo Montanari and Pietro Sala), 
and the Spanish MEC project \emph{TIN2009-14372-C03-01} (Guido Sciavicco).

\bibliographystyle{eptcs}
\bibliography{gandalf2011}
\end{document}